\documentclass[lettersize,journal]{IEEEtran}
\usepackage{amsmath,amsfonts}
\usepackage{algorithmic}
\usepackage{subfigure}
\usepackage{float} 
\usepackage{amssymb}
\usepackage{hyperref}
\usepackage[normalem]{ulem}
\usepackage[table]{xcolor}
\usepackage{colortbl}
\definecolor{groupgray}{RGB}{242,244,247}
\definecolor{oursblue}{RGB}{235,242,252}
\definecolor{accgray}{RGB}{248,248,248}
\useunder{\uline}{\ul}{}
\usepackage{multirow}
\usepackage{algorithm}
\usepackage{comment}
\usepackage{makecell}
\usepackage{booktabs}
\usepackage{enumitem}
\usepackage{array}
\usepackage{textcomp}
\usepackage{stfloats}
\usepackage{url}
\usepackage{verbatim}
\usepackage{graphicx}
\usepackage{cite}
\usepackage{xcolor}
\usepackage{balance}
\usepackage{amsmath, amsthm, amssymb}

\newtheorem{definition}{Definition}
\newtheorem{proposition}{Proposition}

\begin{document}

\title{Delayed Commitment for Representation Readiness in Stage-wise Audio-Visual Learning}

\author{Xinmeng Xu, Haoran Xie, S. Joe Qin, Lin Li, Xiaohui Tao, Fu Lee Wang
\thanks{This work was supported by the Research Impact Fund by the Research Grants Council of Hong Kong (Project No. 130272); two grants from the Research Grants Council of the Hong Kong Special Administrative Region, China (R1015-23 and UGC/FDS16/E17/23); the Faculty Research Grants (SDS24A8, SDS25A15 and SDS24A19), Interdisciplinary \& Strategic Research Grant (ISRG252606), and the Direct Grants (DR25E8 and DR26F2) of Lingnan University, Hong Kong.}\\
\thanks{Xinmeng Xu, Haoran Xie, and S. Joe Qin are with the Division of Artificial Intelligence, Lingnan University, Tuen Mun, Hong Kong SAR (e-mail: xinmengxu@ln.edu.hk; hrxie@ln.edu.hk; joeqin@ln.edu.hk).\\
\indent Lin Li is with the School of Computer Science and Artificial Intelligence,  Wuhan University of Technology, Wuhan, China (email: cathylilin@whut.edu.cn).\\
\indent Xiaohui Tao is with the School of Mathematics, Physics and Computing, University of Southern Queensland, Toowoomba, Australia (email:
xtao@usq.edu.au).\\
\indent Fu Lee Wang is with the School of Science and Technology, Hong Kong
Metropolitan University, Ho Man Tin, Hong Kong SAR (email: pwang@hkmu.edu.hk).}
\thanks{(\textit{Corresponding author: Haoran Xie.})}
}



\maketitle
\begin{abstract}
Stage-wise audio-visual encoders propagate fused intermediate states across layers, making the formation of later representations depend on the readiness of earlier fusion states. Strong local audio-visual agreement provides useful correspondence evidence, yet a fused state also needs sufficient cross-layer and cross-modal support before it can reliably guide later fusion. This paper studies this issue through propagation-aware representation readiness and formulates premature perceptual commitment as a readiness-deficiency problem, where local plausibility, propagation influence, and support insufficiency jointly appear at an intermediate stage. We propose the Delayed Perceptual Commitment Network (DPC-Net), an encoder-level framework that estimates an observable readiness-deficiency surrogate, localizes the intervention-sensitive bottleneck, and applies support-aware correction with cross-layer and cross-modal evidence. DPC-Net preserves task-specific heads, losses, decoding modules, and evaluation protocols, making it applicable to different audio-visual tasks through encoder-side intervention. Experiments on audio-visual speech separation, audio-visual event localization, and audio-visual speech recognition show consistent improvements across reconstruction, localization, and recognition regimes. Further analyses on component contribution, selection criteria, counterfactual intervention, and readiness trajectories support the effectiveness of readiness-guided bottleneck correction.
\end{abstract}

\begin{IEEEkeywords}
Audio-visual learning, multimodal fusion, audio-visual speech separation, audio-visual event localization, audio-visual speech recognition, representation readiness.
\end{IEEEkeywords}

\section{Introduction}
\IEEEPARstart{A}{udio-visual} learning builds task-relevant representations by integrating acoustic and visual cues that provide complementary evidence in temporal dynamics, articulatory structure, semantic correspondence, and robustness under adverse conditions~\cite{ref1, ref2, ref3, ref4, ref5}. Audio-visual speech separation, audio-visual event localization, and audio-visual speech recognition differ in output form, yet they commonly rely on encoder-based audio-visual representation learning. In architectures with explicit stage-wise fusion, cross-modal interaction is performed across multiple intermediate layers, and each fused state becomes part of the evidence basis for later integration~\cite{ref6, ref7, ref8}. This propagation process makes representation readiness an important issue: a fused state should provide useful local correspondence while also carrying sufficient support for subsequent fusion.

Existing encoder-based audio-visual systems have improved intermediate fusion through cross-attention, gating, feature modulation, reliability-aware weighting, and adapter-style interaction between audio and visual streams~\cite{ref10, ref11, ref12, ref13, ref14}. These mechanisms enhance local cross-modal compatibility and adapt interaction strength under varying input conditions. Their main decision, however, is usually made at the current fusion step. A locally well-matched fused state can gain strong influence over deeper layers before complementary, temporally delayed, or reliability-dependent evidence has been sufficiently consolidated~\cite{ref15, ref16, ref17, ref18}. In this case, easy-to-match cues may dominate later propagation, while weakly matched but task-relevant cues receive limited support during subsequent representation formation.

\begin{figure*}[t]
    \centering
    \includegraphics[width=0.95\textwidth]{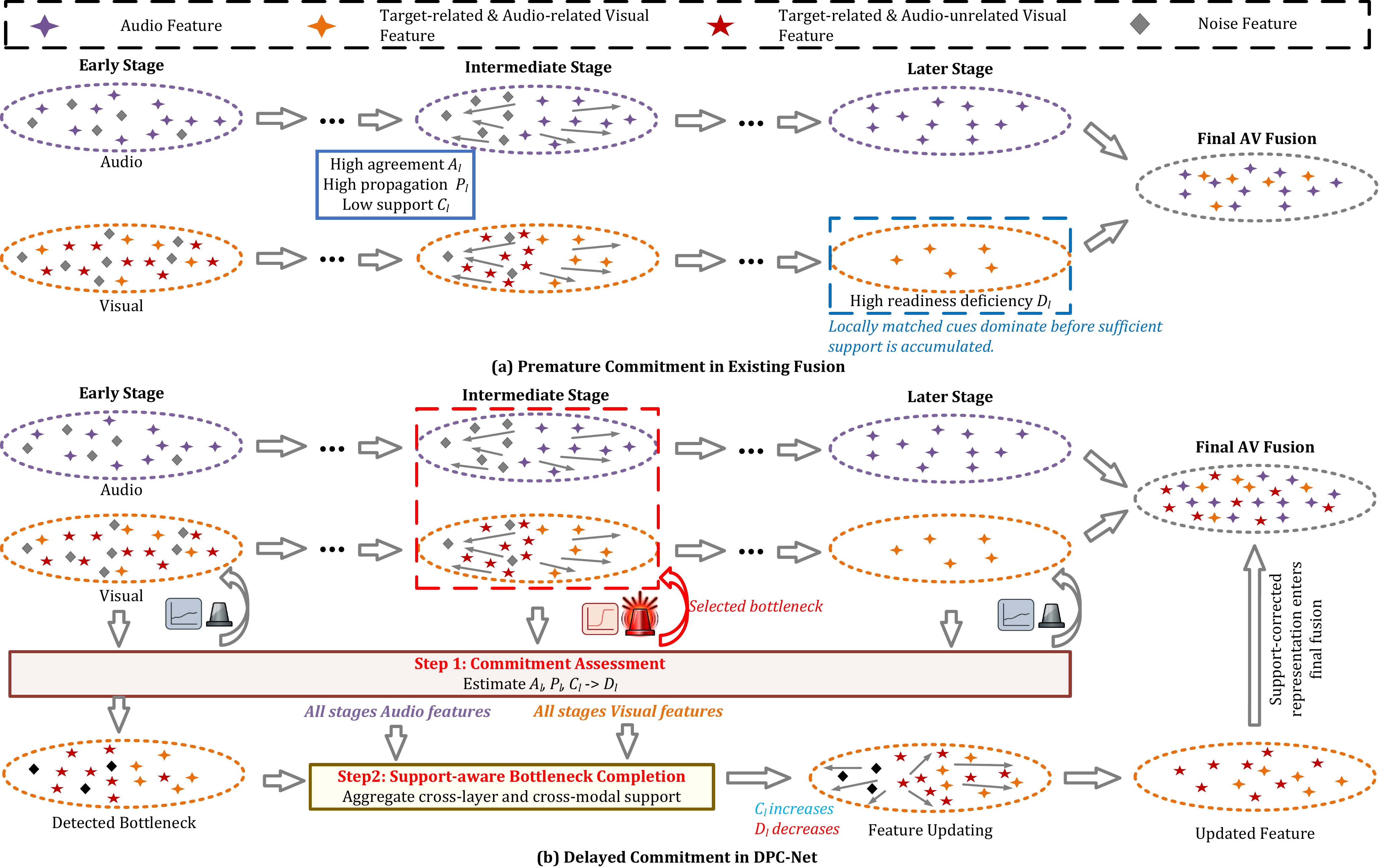}
    \caption{
    Conceptual illustration of premature commitment and delayed commitment.
    \textbf{(a)} In existing stage-wise fusion, a locally plausible fused state may gain propagation influence before sufficient support coverage is accumulated, producing high readiness deficiency.
    \textbf{(b)} DPC-Net estimates an observable readiness-deficiency surrogate, localizes the intervention-sensitive bottleneck, and applies support-aware correction to increase support coverage and reduce deficiency before subsequent fusion.
    }
    \label{fig:motivation_dpc}
\end{figure*}

We describe this failure mode as \emph{premature perceptual commitment}. As illustrated in Fig.~\ref{fig:motivation_dpc}, it occurs when three conditions coexist at an intermediate fusion stage: strong current audio-visual agreement, sufficient propagation influence over later fusion, and insufficient support coverage from cross-layer, cross-modal, or reliability-dependent evidence. Such a state is locally plausible, but its support is incomplete for later propagation. Human audiovisual perception offers a useful computational analogy for this distinction, since multisensory integration involves local correspondence, cue reliability, common-cause plausibility, and uncertainty under partial correspondence or sensory degradation~\cite{ref28, ref29, ref30, ref31, ref32, ref33, ref34}. We use perceptual commitment as a representation-level abstraction that separates local correspondence from stable integration readiness.

Motivated by this perspective, we propose the \emph{Delayed Perceptual Commitment Network} (DPC-Net), an encoder-level framework for readiness control in stage-wise audio-visual learning. DPC-Net estimates an observable readiness-deficiency surrogate from three cues: current audio-visual agreement, downstream anchoring, and support coverage. The stage with the largest estimated deficiency is treated as the intervention-sensitive bottleneck. DPC-Net then aggregates cross-layer and cross-modal support evidence and applies a gated residual correction to the selected bottleneck. This design focuses correction on the vulnerable intermediate state while preserving task-specific heads, losses, decoding modules, and evaluation protocols.

The proposed formulation leads to a compact mechanism-oriented evaluation. A useful readiness surrogate should select a stage that is more sensitive to perturbation, more responsive to support-aware completion, and distinguishable from stages selected by agreement-only, support-deficit-only, attention-response, fixed-depth, or random criteria. We evaluate this principle across three audio-visual output regimes: signal-level reconstruction with audio-visual speech separation, segment-level localization with audio-visual event localization, and sequence-level recognition with audio-visual speech recognition. These tasks provide complementary tests of whether encoder-side readiness control benefits stage-wise audio-visual representation learning across different supervision and prediction structures.

The main contributions of this work are summarized as follows:
\begin{itemize}
    \item We define \emph{propagation-aware representation readiness} as an intervention-oriented property for stage-wise audio-visual learning, where a locally matched fused state should also be sufficiently supported before guiding later fusion.

    \item We formulate \emph{premature perceptual commitment} through readiness deficiency, which jointly characterizes local plausibility, propagation influence, and support insufficiency at an intermediate fusion stage.

    \item We propose DPC-Net, an encoder-level readiness-control framework that estimates an observable surrogate \(\widehat{D}_l\), localizes the intervention-sensitive bottleneck, and performs support-aware correction with cross-layer and cross-modal evidence.

    \item We validate the proposed framework across reconstruction, localization, and recognition regimes, with mechanism-oriented analyses covering component contribution, selection criteria, counterfactual intervention, and readiness trajectories.
\end{itemize}

\section{Related Work}

\subsection{Stage-wise Audio-Visual Representation Learning}

Audio-visual learning exploits complementary acoustic and visual evidence for event understanding, speech separation, speech recognition, and robust multimodal perception~\cite{ref1, ref6, ref38, ref41, ref40}. Early systems commonly integrated modalities through feature concatenation, audio-visual conditioning, or prediction-level aggregation. Recent encoder-based architectures increasingly perform fusion across multiple intermediate stages, allowing audio and visual streams to interact during representation formation~\cite{ref6, ref38, ref39, ref42}. This design has been widely adopted in audio-visual event localization, speech separation, and speech recognition, where visual cues support event reasoning, target speaker extraction, and recognition under noisy acoustic conditions~\cite{ref4, ref7, ref22, ref25, ref41, ref43, ref44}.

Most stage-wise fusion methods focus on improving cross-modal exchange at each layer, such as strengthening correspondence, modeling temporal interaction, or refining modality-specific representations. Repeated fusion also creates a propagation process: once a fused state is formed, it can shape the evidence available to deeper layers. This work focuses on the readiness of such intermediate fused states, asking whether a locally plausible representation has accumulated enough support to guide subsequent fusion.

\subsection{Selective and Reliability-aware Fusion}

Selective and reliability-aware fusion methods regulate modality contribution, feature exchange, or interaction strength according to the current input condition~\cite{ref45, ref46, ref47, ref48, ref13, ref42}. Representative designs include reliability-aware weighting, modality-conditioned enhancement, trainable cross-modal adapters, prompt-based modulation, and mixture-of-experts routing~\cite{ref45, ref46, ref48, ref47, ref13, ref42}. These approaches improve robustness when one modality becomes noisy, incomplete, or less informative than the other.

The main decision in these methods is usually local to the current fusion step. Reliability, compatibility, confidence, or routing scores determine how the model modulates the present interaction~\cite{ref45, ref46, ref47, ref13}. Representation-readiness control addresses a complementary question: after a fused state is produced, whether it has enough cross-layer and cross-modal support to influence later propagation. This distinction is central to DPC-Net, which estimates the readiness deficiency of intermediate states and corrects the selected bottleneck before subsequent fusion proceeds.

\subsection{Intermediate Representation Diagnosis and Intervention}

Intermediate representation analysis helps reveal how deep models organize evidence across layers. Layer-wise probing, perturbation analysis, sensitivity measurement, and intervention-based diagnosis can indicate whether a hidden state carries useful information, how vulnerable it is to corruption, and how modifying it affects downstream prediction~\cite{ref15, ref16, ref17, ref19}. These analyses are relevant to stage-wise audio-visual learning because intermediate fused states participate in later representation formation.

Existing audio-visual fusion methods usually select interactions according to current compatibility, reliability, or attention responses. A readiness-oriented framework instead treats stage selection as an intervention problem. The key is to identify the intermediate state that is both influential for later propagation and insufficiently supported by available evidence. This motivates the bottleneck-localization view adopted in DPC-Net.

\subsection{Perceptual Commitment and Representation Readiness}

Human audiovisual perception provides a useful computational reference for separating local correspondence from stable commitment~\cite{ref28, ref30, ref31}. Studies of causal inference and reliability-aware integration suggest that multisensory perception depends on cross-modal agreement, common-cause plausibility, cue reliability, and uncertainty under partial correspondence or sensory degradation~\cite{ref30, ref31, ref49, ref37, ref50, ref34}. Related findings also indicate that integrated and segregated interpretations can remain in competition before a stable percept is formed~\cite{ref33, ref37, ref36}.

We adopt perceptual commitment as a representation-learning abstraction. For stage-wise audio-visual encoders, local agreement provides correspondence evidence, while propagation readiness further depends on support coverage from cross-layer, cross-modal, and reliability-dependent evidence. The delayed commitment formulation follows this view by treating premature commitment as a readiness failure in which a locally plausible intermediate state gains propagation influence before sufficient support coverage is accumulated.

\section{Propagation-aware Readiness Principle}
\label{sec:readiness_formulation}

This section develops the readiness-control principle for stage-wise audio-visual fusion. The central object is the readiness deficiency \(D_l\), which characterizes an intermediate fused state that is locally plausible, influential for later propagation, and insufficiently supported. The formulation first describes stage-wise fusion as evidence propagation, then defines readiness deficiency, introduces an observable surrogate for bottleneck localization, and finally derives intervention-oriented tests used in the mechanism analysis.

\subsection{Evidence Propagation in Stage-wise Fusion}
\label{sec:evidence_propagation}

Consider an encoder with \(L\) explicit audio-visual fusion stages. Let \(a_l\) and \(v_l\) denote the audio and visual states at stage \(l\). The fused state is written as
\begin{equation}
z_l = F_l(a_l,v_l),
\label{eq:stage_fusion}
\end{equation}
where \(F_l(\cdot)\) denotes the stage-wise fusion operator. Since intermediate states are propagated through the encoder, the next fusion stage can be abstractly expressed as
\begin{equation}
z_{l+1}=T_{l+1}(z_l,a_{l+1},v_{l+1}),
\label{eq:stage_propagation}
\end{equation}
where \(T_{l+1}(\cdot)\) denotes the subsequent propagation and fusion operator. Thus, \(z_l\) acts as an evidence state for deeper audio-visual integration.

This propagation view motivates three quantities for evaluating whether an intermediate fused state is ready to guide later fusion. The first quantity is current audio-visual agreement:
\begin{equation}
A_l = \mathcal{A}(a_l,v_l),
\label{eq:agreement}
\end{equation}
where \(\mathcal{A}(\cdot)\) measures local correspondence between the two modalities. The second quantity is propagation influence:
\begin{equation}
P_l = \mathcal{P}(z_l,z_{>l}),
\label{eq:propagation_influence}
\end{equation}
where \(z_{>l}\) denotes the later fusion trajectory affected by \(z_l\). A high \(P_l\) indicates that the current fused state can strongly shape subsequent representation formation.

The third quantity is support coverage. Let
\begin{equation}
\mathcal{E}_a=\{a_j\}_{j=1}^{L},
\qquad
\mathcal{E}_v=\{v_j\}_{j=1}^{L}
\label{eq:all_stage_evidence}
\end{equation}
denote the all-stage audio and visual evidence. The supportive evidence for stage \(l\) is obtained as
\begin{equation}
q_l = H_l(\mathcal{E}_a,\mathcal{E}_v,z_l),
\label{eq:support_evidence}
\end{equation}
where \(H_l(\cdot)\) aggregates cross-layer and cross-modal information relevant to \(z_l\). Support coverage is defined as
\begin{equation}
C_l = \mathcal{C}(z_l,q_l),
\label{eq:support_coverage}
\end{equation}
where \(\mathcal{C}(\cdot)\) measures how sufficiently the current fused state is covered by supportive evidence.

\begin{definition}[Propagation-aware representation readiness]
\label{def:representation_readiness}
A fused state \(z_l\) is propagation-ready when its local plausibility, propagation influence, and support coverage are jointly sufficient for subsequent fusion. It is readiness-deficient when the state is locally plausible and influential for later propagation, while its support coverage remains insufficient. In this definition, \(A_l\) characterizes local plausibility, \(P_l\) characterizes propagation influence, and \(C_l\) characterizes support sufficiency.
\end{definition}

\subsection{Readiness Deficiency}
\label{sec:readiness_deficiency}

Based on Definition~\ref{def:representation_readiness}, we define the propagation-aware readiness deficiency as
\begin{equation}
D_l =
\underbrace{\psi(A_l-\tau_A)}_{\text{local plausibility}}
\underbrace{\psi(P_l-\tau_P)}_{\text{propagation influence}}
\underbrace{\psi(\tau_C-C_l)}_{\text{support insufficiency}},
\label{eq:readiness_deficiency}
\end{equation}
where \(\tau_A\), \(\tau_P\), and \(\tau_C\) are thresholds for agreement, propagation influence, and support coverage, respectively. The function \(\psi(\cdot)\) is a non-negative activation, such as a positive-part function or its smooth approximation. Equation~\eqref{eq:readiness_deficiency} becomes large when strong local plausibility, sufficient propagation influence, and support insufficiency appear together.

This formulation separates readiness deficiency from simpler criteria. A high-agreement stage can already be well supported, a low-support stage can have limited influence on later fusion, and a high-propagation stage can already possess sufficient support. The readiness bottleneck is therefore the stage where local plausibility, propagation influence, and support insufficiency jointly become most severe. Under the common positive-part condition, where \(\psi(x)=0\) for \(x\leq 0\) and \(\psi(x)>0\) for \(x>0\), \(D_l>0\) holds exactly when
\begin{equation}
A_l>\tau_A,\qquad
P_l>\tau_P,\qquad
C_l<\tau_C.
\label{eq:joint_activation}
\end{equation}
This joint activation condition explains why readiness control differs from selecting the highest-agreement stage, the lowest-support stage, or a fixed intermediate layer.

\subsection{Observable Surrogate and Bottleneck Localization}
\label{sec:bottleneck_localization}

The analytical readiness bottleneck is defined as
\begin{equation}
l^\star=\arg\max_{l} D_l .
\label{eq:analytical_bottleneck}
\end{equation}
In implementation, DPC-Net estimates feature-level cues and uses them to approximate the intervention-relevant ordering of fusion stages. The agreement cue is computed as
\begin{equation}
\widehat{A}_l =
\operatorname{sim}
\left(
\phi_a(a_l),
\phi_v(v_l)
\right).
\label{eq:agreement_proxy}
\end{equation}
The propagation cue is implemented as a downstream-anchoring proxy:
\begin{equation}
\widehat{P}_l =
\operatorname{sim}
\left(
\phi_p(z_l),
\phi_p(f_f)
\right),
\label{eq:propagation_proxy}
\end{equation}
where \(f_f\) denotes the fused summary delivered by the encoder to the subsequent task module. This cue measures how strongly the stage representation is anchored to the downstream fused representation, making it useful for intervention ranking. The support-coverage cue is computed as
\begin{equation}
\widehat{C}_l =
\operatorname{sim}
\left(
\phi_c(z_l),
\phi_c(q_l)
\right).
\label{eq:support_proxy}
\end{equation}
The observable readiness-deficiency surrogate is then
\begin{equation}
\widehat{D}_l =
\psi(\widehat{A}_l-\tau_A)
\psi(\widehat{P}_l-\tau_P)
\psi(\tau_C-\widehat{C}_l).
\label{eq:observable_deficiency}
\end{equation}
The role of \(\widehat{D}_l\) is intervention-oriented. It ranks stages according to their expected vulnerability to perturbation and responsiveness to support-aware correction. Since DPC-Net performs stage intervention, the key requirement for \(\widehat{D}_l\) is ordering consistency.

\begin{proposition}[Ordering-preserving bottleneck localization]
\label{prop:ordering_localization}
Assume that the analytical bottleneck stage is unique, \(l^\star = \arg\max_l D_l\). Let \(\varphi(\cdot)\) be a strictly increasing calibration function, and define
\begin{equation}
m_\varphi =
\varphi(D_{l^\star})
-
\max_{j\ne l^\star}\varphi(D_j).
\label{eq:calibrated_margin}
\end{equation}
Suppose that \(\widehat{D}_l=\varphi(D_l)+e_l\) and \(|e_l|\leq \epsilon\) for all \(l\). If \(m_\varphi>2\epsilon\), then
\begin{equation}
\widehat{l}
=
\arg\max_l \widehat{D}_l
=
l^\star .
\label{eq:surrogate_localization}
\end{equation}
\end{proposition}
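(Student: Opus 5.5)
The argument is a direct perturbation comparison. We show that the surrogate value at the analytical bottleneck strictly exceeds the surrogate value at every other stage. The uniform error bound \(|e_l|\le\epsilon\) costs at most \(\epsilon\) on each side of such a comparison. The calibrated margin \(m_\varphi\) from \eqref{eq:calibrated_margin} then absorbs the total loss of \(2\epsilon\).

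\emph{Step 1: the margin is well defined.} Because \(l^\star\) is the unique maximizer of \(D_l\), we have \(D_{l^\star}>D_j\) for every \(j\ne l^\star\). Since \(\varphi\) is strictly increasing, this gives \(\varphi(D_{l^\star})>\varphi(D_j)\) for every such \(j\), so \(m_\varphi>0\). The hypothesis \(m_\varphi>2\epsilon\) is the quantitative strengthening we actually use. Because the number of stages \(L\) is finite, the maximum in \eqref{eq:calibrated_margin} is attained.

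\emph{Step 2: the key inequality chain.} Fix any \(j\ne l^\star\) and write
\begin{equation*}
\widehat{D}_{l^\star}-\widehat{D}_j
=\bigl(\varphi(D_{l^\star})-\varphi(D_j)\bigr)+\bigl(e_{l^\star}-e_j\bigr).
\end{equation*}
The first bracket is at least \(m_\varphi\), by definition of the maximum over \(j\ne l^\star\). The second bracket is at least \(-|e_{l^\star}|-|e_j|\ge-2\epsilon\). Hence \(\widehat{D}_{l^\star}-\widehat{D}_j\ge m_\varphi-2\epsilon>0\).

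\emph{Step 3: conclusion.} Every other stage has a strictly smaller surrogate value, so \(l^\star\) is the unique maximizer of \(\widehat{D}_l\). The arg max in \eqref{eq:surrogate_localization} is therefore single-valued and equals \(l^\star\).

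There is no substantial obstacle. The only point requiring care is Step 2: the bound must be taken against the largest competitor \(\max_{j\ne l^\star}\varphi(D_j)\) rather than an arbitrary one, and the \(2\epsilon\) must be accounted for as two separate errors, one at \(l^\star\) and one at \(j\). One could also remark that the threshold \(2\epsilon\) is tight. To see this, take \(e_{l^\star}=-\epsilon\) and \(e_j=+\epsilon\) at the strongest competitor. If \(m_\varphi=2\epsilon\), the two surrogate values become equal and uniqueness of \(\widehat l\) can fail.
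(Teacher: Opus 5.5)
Your proof is correct and takes the same approach as the paper. The paper lower-bounds \(\widehat{D}_{l^\star}\) by \(\varphi(D_{l^\star})-\epsilon\), upper-bounds each \(\widehat{D}_j\) by \(\max_{k\ne l^\star}\varphi(D_k)+\epsilon\), and lets \(m_\varphi>2\epsilon\) separate the two bounds — this is your difference decomposition written as two one-sided bounds (your tightness remark at \(m_\varphi=2\epsilon\) is also correct, though not needed).
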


\begin{proof}
For \(l^\star\), we have \(\widehat{D}_{l^\star}\geq \varphi(D_{l^\star})-\epsilon\). For any \(j\ne l^\star\), \(\widehat{D}_j\leq \max_{k\ne l^\star}\varphi(D_k)+\epsilon\). The margin condition \(m_\varphi>2\epsilon\) gives \(\widehat{D}_{l^\star}>\widehat{D}_j\) for all \(j\ne l^\star\), which proves Eq.~\eqref{eq:surrogate_localization}.
\end{proof}

Proposition~\ref{prop:ordering_localization} shows that the surrogate only needs to preserve the intervention-relevant ordering of stages with sufficient margin. This property matches the goal of readiness control: identifying the stage most likely to be harmful when perturbed and beneficial when corrected.

\subsection{Support-aware Bottleneck Correction}
\label{sec:support_correction}

After localizing \(l^\star\), DPC-Net improves the support coverage of the selected fused state. The support evidence for the selected stage is
\begin{equation}
q_{l^\star}
=
H_{l^\star}(\mathcal{E}_a,\mathcal{E}_v,z_{l^\star}).
\label{eq:selected_support}
\end{equation}
A correction direction is constructed from the selected state and its support evidence:
\begin{equation}
d_{l^\star}
=
\mathcal{B}(z_{l^\star},q_{l^\star}),
\label{eq:completion_direction}
\end{equation}
where \(\mathcal{B}(\cdot)\) denotes the completion operator. The corrected state is written as
\begin{equation}
\widetilde{z}_{l^\star}(\lambda)
=
z_{l^\star}
+
\lambda d_{l^\star},
\qquad
\lambda\in[0,1].
\label{eq:corrected_state}
\end{equation}
This equation abstracts the gated residual update used in the network implementation, with the purpose of increasing support coverage for the selected bottleneck.

\begin{proposition}[Local support correction reduces readiness deficiency]
\label{prop:support_correction}
Assume that \(C(\cdot)\) is continuously differentiable near \(z_{l^\star}\). If
\begin{equation}
\nabla C(z_{l^\star})^\top d_{l^\star}>0,
\label{eq:support_direction}
\end{equation}
then there exists \(\lambda_0\in(0,1]\) such that, for every \(\lambda\in(0,\lambda_0]\),
\begin{equation}
\widetilde{C}_{l^\star}(\lambda)>C_{l^\star}.
\label{eq:support_increase}
\end{equation}
Furthermore, if the active agreement-propagation factor satisfies
\begin{equation}
\begin{aligned}
&\psi(\widetilde{A}_{l^\star}-\tau_A)
\psi(\widetilde{P}_{l^\star}-\tau_P)
\\
&\quad \leq
\psi(A_{l^\star}-\tau_A)
\psi(P_{l^\star}-\tau_P),
\end{aligned}
\label{eq:active_factor}
\end{equation}
then \(\widetilde{D}_{l^\star}(\lambda)<D_{l^\star}\).
\end{proposition}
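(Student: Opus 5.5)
The plan is to handle the two claims separately: a one-dimensional first-order argument for the support increase, and a monotonicity argument for the product structure of Eq.~\eqref{eq:readiness_deficiency}. For the first claim, I would define the scalar function \(g(\lambda)=C(z_{l^\star}+\lambda d_{l^\star})\) on a small interval around \(0\), so that \(g(0)=C_{l^\star}\) and \(g(\lambda)=\widetilde{C}_{l^\star}(\lambda)\). Since \(C(\cdot)\) is continuously differentiable near \(z_{l^\star}\), the chain rule gives that \(g\) is \(C^1\) near \(0\) with
\begin{equation*}
g'(\lambda)=\nabla C(z_{l^\star}+\lambda d_{l^\star})^\top d_{l^\star},
\end{equation*}
and Eq.~\eqref{eq:support_direction} says \(g'(0)>0\). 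By continuity of \(g'\), there is \(\lambda_0\in(0,1]\) with \(g'(\lambda)>0\) on \([0,\lambda_0]\). The mean value theorem then yields \(g(\lambda)-g(0)=\lambda g'(\xi)>0\) for every \(\lambda\in(0,\lambda_0]\), which is Eq.~\eqref{eq:support_increase}.

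For the second claim, I would write \(F=\psi(A_{l^\star}-\tau_A)\psi(P_{l^\star}-\tau_P)\) and \(\widetilde{F}\) for its corrected counterpart, so that \(D_{l^\star}=F\,\psi(\tau_C-C_{l^\star})\) and \(\widetilde{D}_{l^\star}(\lambda)=\widetilde{F}\,\psi(\tau_C-\widetilde{C}_{l^\star}(\lambda))\). The chain of inequalities I aim for is
\begin{equation*}
\widetilde{D}_{l^\star}(\lambda)\le F\,\psi(\tau_C-\widetilde{C}_{l^\star}(\lambda)) < F\,\psi(\tau_C-C_{l^\star}) = D_{l^\star}.
\end{equation*}
The first inequality uses Eq.~\eqref{eq:active_factor} together with \(\psi\ge 0\). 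The strict inequality needs two facts. First, \(F>0\), which I would obtain from \(D_{l^\star}>0\): the bottleneck is active, meaning the joint activation condition Eq.~\eqref{eq:joint_activation} holds at \(l^\star\). Second, \(\psi(\tau_C-\widetilde{C}_{l^\star})<\psi(\tau_C-C_{l^\star})\). For the positive-part activation this follows by cases. If \(\widetilde{C}_{l^\star}\ge\tau_C\), the left side is \(0\), while the right side is positive because \(C_{l^\star}<\tau_C\). Otherwise both arguments are positive, \(\tau_C-\widetilde{C}_{l^\star}<\tau_C-C_{l^\star}\) by the first part, and \(\psi\) is strictly increasing on \((0,\infty)\).

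The main obstacle is that the strict conclusion \(\widetilde{D}_{l^\star}(\lambda)<D_{l^\star}\) is not literally implied by the stated hypotheses. It fails if \(D_{l^\star}=0\), since then both sides can be zero, and it fails if \(\psi\) is merely non-negative rather than strictly increasing on the positive half-line. I would therefore make explicit the standing assumptions already implicit in Section~\ref{sec:readiness_deficiency}: \(\psi\) is a positive-part function or a smooth approximation of one that is strictly increasing where positive, and \(l^\star\) is an active bottleneck with \(D_{l^\star}>0\). In the degenerate case \(D_{l^\star}=0\), I would note that only the weak inequality \(\widetilde{D}_{l^\star}\le D_{l^\star}\) can be proved.
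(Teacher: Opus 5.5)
Your proposal is correct and follows essentially the same route as the paper: a first-order argument along the ray $z_{l^\star}+\lambda d_{l^\star}$, followed by combining the decrease of the support-insufficiency factor with Eq.~\eqref{eq:active_factor}; the only mechanical difference is that you use continuity of $g'$ and the mean value theorem where the paper uses a little-$o$ expansion of $\chi(\lambda)$. Your added assumptions, $D_{l^\star}>0$ and $\psi$ strictly increasing where positive, are well founded: the paper's proof passes over the strictness step, and with the positive-part $\psi$ and $D_{l^\star}=0$ one gets only equality, whereas for the softplus $\psi$ of Eq.~\eqref{eq:smooth_positive} both conditions hold automatically.
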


\begin{proof}
Let \(\chi(\lambda)=C(z_{l^\star}+\lambda d_{l^\star})\). A first-order expansion gives
\begin{equation}
\chi(\lambda)
=
C(z_{l^\star})
+
\lambda
\nabla C(z_{l^\star})^\top d_{l^\star}
+
o(\lambda).
\label{eq:first_order_support}
\end{equation}
Eq.~\eqref{eq:support_direction} ensures that \(\chi(\lambda)>C(z_{l^\star})\) for sufficiently small positive \(\lambda\). Since \(\psi(\tau_C-C)\) decreases as \(C\) increases, the support-insufficiency factor decreases. Combining this with Eq.~\eqref{eq:active_factor} gives the reduction of readiness deficiency.
\end{proof}

Proposition~\ref{prop:support_correction} gives a local sufficient condition for bottleneck completion. It motivates the readiness-trajectory analysis, which checks whether support-aware correction increases \(\widehat{C}_{l^\star}\) while reducing \(\widehat{D}_{l^\star}\) at the selected stage.

\begin{table}[t]
\centering
\caption{
Connection between readiness formulation, implementation, and validation.
}
\label{tab:readiness_mapping}
\setlength{\tabcolsep}{3.5pt}
\renewcommand{\arraystretch}{1.08}
\resizebox{\linewidth}{!}{%
\begin{tabular}{lll}
\hline
Concept & Implementation cue & Validation focus \\
\hline
Local plausibility & \(\widehat{A}_l\): audio-visual agreement & Agreement-only comparison \\
Propagation relevance & \(\widehat{P}_l\): downstream anchoring & Intervention sensitivity \\
Support sufficiency & \(\widehat{C}_l\): support coverage & Readiness trajectory \\
Readiness deficiency & \(\widehat{D}_l\): joint surrogate & Bottleneck selection and recovery \\
\hline
\end{tabular}}
\end{table}

\subsection{Observable Tests for Readiness Control}
\label{sec:observable_implications}

The formulation leads to observable tests for validating \(\widehat{D}_l\) as an intervention-oriented surrogate. Let \(U(\cdot)\) denote a task utility, where larger values indicate better performance. For a controlled perturbation \(\Pi_l(\cdot)\), the utility drop is
\begin{equation}
\Delta^{\mathrm{drop}}_l
=
U(z)-U(\Pi_l(z)).
\label{eq:utility_drop}
\end{equation}
For completion \(\Gamma_l(\cdot)\), the utility recovery is
\begin{equation}
\Delta^{\mathrm{rec}}_l
=
U(\Gamma_l(\Pi_l(z)))-U(\Pi_l(z)).
\label{eq:utility_recovery}
\end{equation}
A valid readiness bottleneck should satisfy
\begin{equation}
\Delta^{\mathrm{drop}}_{l^\star}
>
\Delta^{\mathrm{drop}}_j,
\qquad
\Delta^{\mathrm{rec}}_{l^\star}
>
\Delta^{\mathrm{rec}}_j,
\qquad
j\in\mathcal{S}_{\mathrm{ctrl}},
\label{eq:observable_predictions}
\end{equation}
where \(\mathcal{S}_{\mathrm{ctrl}}\) denotes control stages selected by neighboring, random, fixed-depth, or alternative score-based strategies. In addition, \(\widehat{D}_l\) should provide a stronger intervention ranking than agreement-only, propagation-only, support-deficit-only, attention-response, fixed-depth, and random selection. Under degraded or imbalanced conditions, \(\widehat{C}_l\) should decrease at vulnerable stages while \(\widehat{D}_l\) becomes more concentrated. These tests guide the mechanism-oriented analyses in Section~\ref{sec:mechanism_analysis}. Table~\ref{tab:readiness_mapping} summarizes how the readiness factors are instantiated and validated. This mapping clarifies that \(\widehat{D}_l\) is evaluated as an intervention-oriented surrogate: each cue contributes to stage ranking, and the final score is tested through selection behavior, perturbation sensitivity, completion recovery, and readiness trajectories.

\section{Framework Implementation}
\label{sec:framework_implementation}

DPC-Net implements readiness control inside an encoder with explicit stage-wise audio-visual fusion. As shown in Fig.~\ref{fig:framework}, the framework contains two coupled components: commitment assessment and support-aware bottleneck completion. Commitment assessment estimates the readiness-deficiency surrogate of each fusion stage and selects the intervention-sensitive bottleneck. Support-aware bottleneck completion then aggregates cross-layer and cross-modal evidence to correct the selected representation before it is delivered to later fusion or the encoder output interface. Since the intervention is performed at the encoder level, task-specific heads, losses, decoding modules, and evaluation protocols are preserved.

\begin{figure*}[t]
    \centering
    \includegraphics[width=0.98\textwidth]{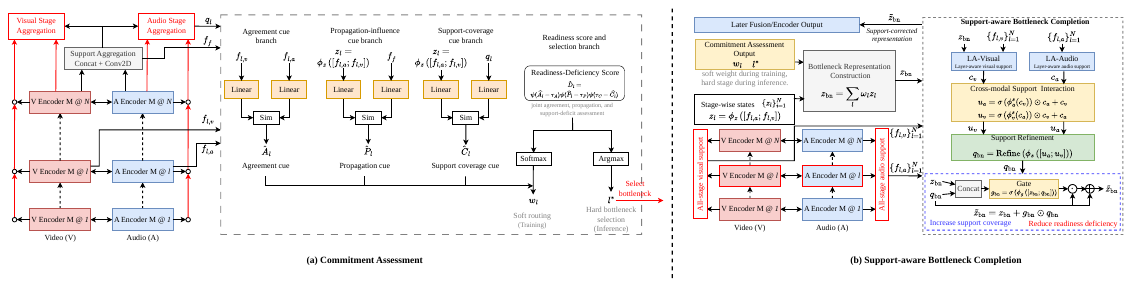}
    \caption{
    Implementation of DPC-Net.
    \textbf{(a)} Commitment assessment estimates agreement, downstream-anchoring, and support-coverage cues, and combines them into the readiness-deficiency surrogate for soft routing during training and hard bottleneck selection during inference.
    \textbf{(b)} Support-aware bottleneck completion aggregates all-stage audio-visual support and produces a support-corrected representation for later fusion or encoder output.
    }
    \label{fig:framework}
\end{figure*}

\subsection{Commitment Assessment}

Given \(N\) stage-wise audio and visual features \(\{f_{l,a}\}_{l=1}^{N}\) and \(\{f_{l,v}\}_{l=1}^{N}\), commitment assessment evaluates the readiness state of each intermediate fusion stage. The fused state at stage \(l\) is constructed as
\begin{equation}
z_l = \phi_z([f_{l,a};f_{l,v}]),
\label{eq:impl_stage_state}
\end{equation}
where \([\cdot;\cdot]\) denotes feature concatenation and \(\phi_z(\cdot)\) is a lightweight projection. The downstream fused summary is computed from the final-stage audio and visual features:
\begin{equation}
f_f = \phi_f([f_{N,a};f_{N,v}]).
\label{eq:impl_fused_summary}
\end{equation}
This summary provides the downstream reference for estimating how strongly an intermediate state is anchored to the representation delivered to the task module.

For each stage, DPC-Net estimates three cues corresponding to the readiness factors in Section~\ref{sec:readiness_formulation}. The agreement cue is
\begin{equation}
\widehat{A}_l =
\operatorname{sim}
\left(
\phi_a(f_{l,a}),
\phi_v(f_{l,v})
\right),
\label{eq:impl_agreement}
\end{equation}
where \(\operatorname{sim}(\cdot,\cdot)\) denotes cosine similarity after projection. The downstream-anchoring cue is
\begin{equation}
\widehat{P}_l =
\operatorname{sim}
\left(
\phi_p(z_l),
\phi_p(f_f)
\right).
\label{eq:impl_propagation}
\end{equation}
A higher \(\widehat{P}_l\) indicates that the stage representation is more strongly aligned with the downstream fused summary and is therefore more relevant for intervention ranking.

The support-coverage cue is obtained by first aggregating all-stage support evidence:
\begin{equation}
q_l =
H_l
\left(
z_l,
\{f_{j,a}\}_{j=1}^{N},
\{f_{j,v}\}_{j=1}^{N}
\right),
\label{eq:impl_support_aggregation}
\end{equation}
where \(H_l(\cdot)\) denotes the support aggregation function. The support-coverage cue is then computed as
\begin{equation}
\widehat{C}_l =
\operatorname{sim}
\left(
\phi_c(z_l),
\phi_c(q_l)
\right).
\label{eq:impl_support_coverage}
\end{equation}
The three cues are combined into the observable readiness-deficiency surrogate:
\begin{equation}
\widehat{D}_l =
\psi(\widehat{A}_l-\tau_A)
\psi(\widehat{P}_l-\tau_P)
\psi(\tau_C-\widehat{C}_l),
\label{eq:impl_readiness_score}
\end{equation}
where \(\tau_A\), \(\tau_P\), and \(\tau_C\) are thresholds in the normalized score space. We implement \(\psi(\cdot)\) with a smooth positive-part approximation:
\begin{equation}
\psi(x)=\frac{1}{\beta_s}\log(1+\exp(\beta_s x)),
\label{eq:smooth_positive}
\end{equation}
where \(\beta_s\) controls the sharpness. Thus, \(\widehat{D}_l\) becomes large when agreement, downstream anchoring, and support insufficiency appear together.

During training, the stage scores are converted into differentiable routing weights:
\begin{equation}
\omega_l =
\frac{\exp(\widehat{D}_l/\tau_s)}
{\sum_{j=1}^{N}\exp(\widehat{D}_j/\tau_s)},
\label{eq:soft_routing}
\end{equation}
where \(\tau_s\) is the routing temperature. During inference, DPC-Net performs hard bottleneck selection:
\begin{equation}
l^\star = \arg\max_l \widehat{D}_l .
\label{eq:hard_selection}
\end{equation}
The selected stage is then passed to support-aware bottleneck completion.

\subsection{Support-aware Bottleneck Completion}

Support-aware bottleneck completion corrects the representation selected by commitment assessment. During training, the bottleneck representation is constructed by soft routing:
\begin{equation}
z_{\mathrm{bn}}
=
\sum_{l=1}^{N}
\omega_l z_l .
\label{eq:bottleneck_train}
\end{equation}
During inference, \(\omega_l\) becomes a one-hot selection induced by \(l^\star\), and \(z_{\mathrm{bn}}=z_{l^\star}\). This design provides differentiable training and stage-specific intervention at test time.

Given \(z_{\mathrm{bn}}\), DPC-Net aggregates support from all visual and audio stages. The layer-aware visual support is
\begin{equation}
c_v =
\operatorname{LA\mbox{-}Visual}
\left(
z_{\mathrm{bn}},
\{f_{l,v}\}_{l=1}^{N}
\right),
\label{eq:visual_support}
\end{equation}
and the layer-aware audio support is
\begin{equation}
c_a =
\operatorname{LA\mbox{-}Audio}
\left(
z_{\mathrm{bn}},
\{f_{l,a}\}_{l=1}^{N}
\right).
\label{eq:audio_support}
\end{equation}
These operations collect cross-layer evidence conditioned on the selected bottleneck representation.

The two support representations are refined through cross-modal support interaction:
\begin{equation}
u_a =
\sigma(\phi_a^s(c_v))\odot c_a + c_v,
\label{eq:audio_support_interaction}
\end{equation}
\begin{equation}
u_v =
\sigma(\phi_v^s(c_a))\odot c_v + c_a,
\label{eq:visual_support_interaction}
\end{equation}
where \(\sigma(\cdot)\) denotes the sigmoid function and \(\odot\) denotes element-wise multiplication. The refined support evidence is obtained as
\begin{equation}
q_{\mathrm{bn}}
=
\operatorname{Refine}
\left(
\phi_s([u_a;u_v])
\right).
\label{eq:refined_support}
\end{equation}
Finally, DPC-Net applies a gated residual correction:
\begin{equation}
g_{\mathrm{bn}}
=
\sigma
\left(
\phi_g([z_{\mathrm{bn}};q_{\mathrm{bn}}])
\right),
\label{eq:residual_gate}
\end{equation}
\begin{equation}
\widetilde{z}_{\mathrm{bn}}
=
z_{\mathrm{bn}}
+
g_{\mathrm{bn}}\odot q_{\mathrm{bn}}.
\label{eq:support_corrected_state}
\end{equation}
The gate controls how much supportive evidence is injected into the bottleneck representation. This update increases the support available to the selected stage and produces the support-corrected representation used by later fusion or the encoder output interface.

\subsection{Encoder-level Deployment}

DPC-Net is deployed as an encoder-level intervention. Commitment assessment produces \(\{\omega_l\}_{l=1}^{N}\) during training and \(l^\star\) during inference. Support-aware bottleneck completion then constructs \(z_{\mathrm{bn}}\), aggregates all-stage support, and outputs \(\widetilde{z}_{\mathrm{bn}}\). The corrected representation is delivered to the later fusion or encoder output module used by each task.

This deployment makes DPC-Net compatible with different audio-visual tasks. In AVSS, AVEL, and AVSR, the framework is inserted into the audio-visual encoder, while the task-specific prediction head, loss function, and decoding process remain unchanged. The same readiness surrogate also provides an explicit intervention target, enabling the selection-criterion, counterfactual, and trajectory analyses in Section~\ref{sec:mechanism_analysis}.

\section{Experimental Setup}
\label{sec:experimental_setup}

We evaluate DPC-Net on three audio-visual output regimes: signal-level reconstruction with audio-visual speech separation (AVSS), segment-level localization with audio-visual event localization (AVEL), and sequence-level recognition with audio-visual speech recognition (AVSR). These tasks differ in supervision form, output structure, and evaluation metric, while sharing a stage-wise audio-visual encoder in which intermediate fused states are propagated to deeper representations. As shown in Fig.~\ref{fig:task_deployment}, DPC-Net is consistently inserted into the audio-visual encoder, and the task-specific heads, losses, decoding modules, and evaluation protocols are preserved.

\begin{figure*}[t]
    \centering
    \includegraphics[width=0.98\textwidth]{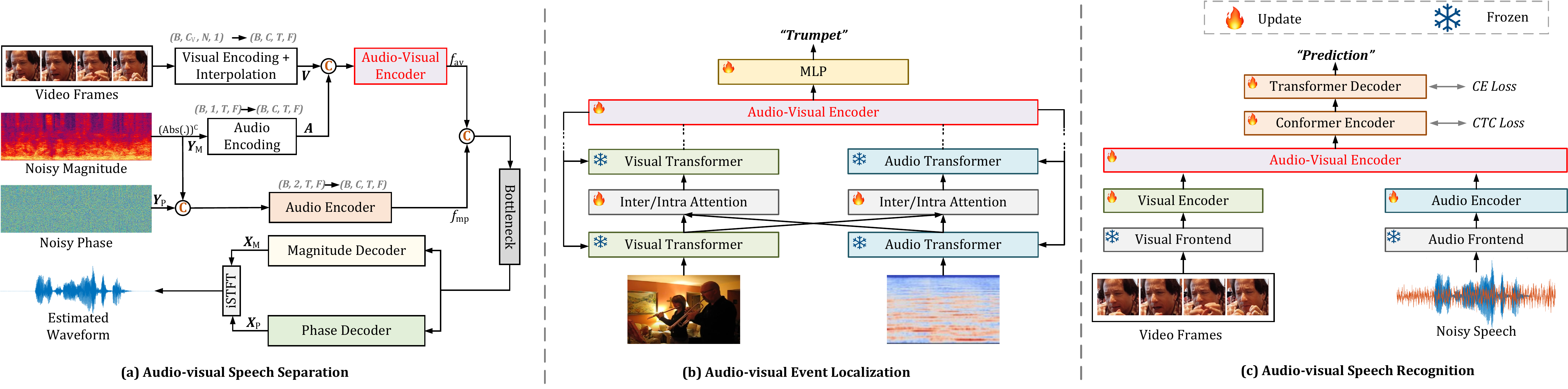}
    \caption{
    Task-level deployment of DPC-Net across AVSS, AVEL, and AVSR.
    DPC-Net is inserted into the audio-visual encoder in all three tasks, while task-specific heads, decoders, losses, and prediction modules are preserved.
    }
    \label{fig:task_deployment}
\end{figure*}

\subsection{Tasks, Datasets, and Metrics}

\textbf{AVSS.}
For audio-visual speech separation, we use a dual encoder-decoder backbone that encodes noisy speech and synchronized video frames before audio-visual fusion. DPC-Net is introduced into the audio-visual encoder, while the phase-magnitude encoding, separation blocks, decoding path, and waveform reconstruction module follow the original backbone~\cite{ref22, ref25, ref41}. We evaluate AVSS on LRS2~\cite{ref4}, LRS3~\cite{ref43}, and VoxCeleb2~\cite{ref44}, following commonly used protocols in recent AVSS studies~\cite{ref10, ref45, ref46}. SI-SNRi, SDRi~\cite{ref47}, and PESQ~\cite{ref48} are reported. To evaluate robustness under unreliable visual evidence, we consider random patch occlusion and noise+blur corruption~\cite{ref22, ref49}.

\textbf{AVEL.}
For audio-visual event localization, we use a paired audio-visual transformer backbone with stage-wise inter-modal and intra-modal interaction. DPC-Net is inserted into the audio-visual encoder before the segment-level prediction head. We evaluate AVEL on the AVE benchmark~\cite{ref41} in the fully supervised setting and report segment-level classification accuracy. The experiments cover shared-backbone and separate-backbone configurations, including ViT, Swin-V2, and HTS-AT settings~\cite{ref50}.

\textbf{AVSR.}
For audio-visual speech recognition, the backbone uses modality-specific front-ends and encoders before audio-visual fusion~\cite{ref51}. DPC-Net is inserted into the audio-visual encoder before the downstream Conformer encoder, CTC branch, and Transformer decoder~\cite{ref54}. We evaluate AVSR on LRS2~\cite{ref4} and LRS3~\cite{ref43}, and report word error rate (WER) under clean and noisy acoustic conditions. We also evaluate joint audio-visual corruption on LRS2 by combining low-SNR audio with patch occlusion or noise+blur visual degradation.

\subsection{Implementation and Comparison Protocol}

All models are optimized with Adam~\cite{ref56}. We follow the task-specific training objectives, decoding procedures, and learning-rate schedules used by the corresponding backbones. Training is conducted on NVIDIA H100 GPUs, and early stopping is applied according to the validation metric of each task.

For benchmark comparison, we follow the standard dataset splits, evaluation metrics, and reporting conventions of each task. Baseline results are taken from the original papers when their reported settings match the benchmark protocol. For the strongest or most directly comparable baselines marked with \(\dagger\), we re-implement and re-train them under the same data split, preprocessing pipeline, optimizer setting, training schedule, and evaluation protocol as DPC-Net. This protocol distinguishes controlled comparisons from results cited under standard benchmark settings.

\subsection{Mechanism-oriented Evaluation}

Beyond benchmark performance, we evaluate whether the learned readiness-deficiency surrogate \(\widehat{D}_l\) behaves as an intervention-oriented bottleneck score. The analysis includes four aspects: component ablation, selection-criterion validation, counterfactual intervention, and readiness-trajectory analysis. Component ablation examines the roles of commitment assessment and support-aware bottleneck completion. Selection-criterion validation compares \(\widehat{D}_l\) with agreement-only, propagation-only, support-deficit-only, attention-response, fixed-depth, and random selection. Counterfactual intervention tests whether perturbing the selected stage causes a larger utility drop and whether completing it yields stronger recovery. Readiness-trajectory analysis examines how \(\widehat{A}_l\), \(\widehat{P}_l\), \(\widehat{C}_l\), and \(\widehat{D}_l\) change under degraded conditions and after support-aware correction.

For metrics with different directions, we follow the standard interpretation of each task: higher SI-SNRi, SDRi, PESQ, and accuracy indicate better performance, while lower WER indicates better recognition. Perturbation-induced degradation and completion-induced recovery are compared within each task because the metrics have different units and scales.

\section{Main Results}
\label{sec:main_evaluation}

We compare DPC-Net with representative audio-visual baselines across three output regimes: signal-level reconstruction, segment-level localization, and sequence-level recognition. In all settings, DPC-Net is inserted into the audio-visual encoder, while task-specific heads, losses, decoding modules, and evaluation protocols are preserved. The results therefore evaluate whether readiness-guided encoder intervention improves stage-wise audio-visual representation learning across different prediction structures.

\subsection{Signal-level Reconstruction: AVSS}
\label{sec:avss_results}

\begin{table*}[t]
\centering
\caption{
AVSS evaluation under standard visual conditions. Higher SI-SNRi, SDRi, and PESQ indicate better performance.
}
\label{tab:avss_standard}
\setlength{\tabcolsep}{3.2pt}
\renewcommand{\arraystretch}{1.05}
\begin{tabular}{lccccccccccccc}
\hline
\multirow{2}{*}{Method} &
\multicolumn{4}{c}{Efficiency} &
\multicolumn{3}{c}{LRS2} &
\multicolumn{3}{c}{LRS3} &
\multicolumn{3}{c}{VoxCeleb2} \\
\cline{2-14}
& Params & MACs & GPU & CPU
& SI-SNRi & SDRi & PESQ
& SI-SNRi & SDRi & PESQ
& SI-SNRi & SDRi & PESQ \\
& (M) & (G) & (ms) & (s)
& \(\uparrow\) & \(\uparrow\) & \(\uparrow\)
& \(\uparrow\) & \(\uparrow\) & \(\uparrow\)
& \(\uparrow\) & \(\uparrow\) & \(\uparrow\) \\
\hline
AV-ConvTasNet~\cite{ref62} & 16.5 & 23.8 & 118.77 & 1.22 & 12.5 & 12.8 & 2.69 & 11.2 & 11.7 & 2.58 & 9.2 & 9.8 & 2.17 \\
VisualVoice~\cite{ref34} & 77.8 & 9.7 & 231.65 & 3.04 & 11.5 & 11.8 & 2.78 & 9.9 & 10.3 & 2.13 & 9.3 & 10.2 & 2.45 \\
CaffNet-C~\cite{ref5} & -- & -- & -- & -- & -- & 10.0 & 1.15 & -- & 9.8 & -- & -- & 7.6 & -- \\
CTC-Net~\cite{ref50} & 7.0 & 167.1 & 162.45 & 1.69 & 14.3 & 14.6 & 3.08 & 17.4 & 17.5 & 3.24 & 11.9 & 13.1 & 3.00 \\
AVLiT-8~\cite{ref63} & 5.8 & 18.2 & 116.27 & 1.15 & 12.8 & 13.1 & 2.56 & 13.5 & 13.6 & 2.78 & 9.4 & 9.9 & 2.23 \\
RTFS-Net-12~\cite{ref48} & 0.7 & 56.4 & 144.61 & 1.52 & 14.9 & 15.1 & 3.07 & 17.5 & 17.6 & 3.25 & 12.4 & 13.6 & 3.00 \\
IIANet\(\dagger\)~\cite{ref11} & 3.1 & 18.6 & 238.94 & 1.46 & 16.2 & 16.4 & 3.26 & 18.5 & 18.7 & 3.31 & 13.8 & 14.5 & 3.15 \\
AV-CrossNet\(\dagger\)~\cite{ref49} & 11.1 & 29.6 & 392.0 & 2.46 & 16.5 & 16.9 & 3.34 & 18.6 & 18.8 & 3.44 & 14.2 & 14.6 & 3.23 \\
\textbf{DPC-Net (Ours)} & 7.1 & 15.7 & 133.12 & 1.24 & \textbf{16.8} & \textbf{17.3} & \textbf{3.52} & \textbf{18.9} & \textbf{19.2} & \textbf{3.69} & \textbf{14.7} & \textbf{14.9} & \textbf{3.39} \\
\hline
\end{tabular}
\end{table*}

Table~\ref{tab:avss_standard} compares DPC-Net with representative AVSS methods under standard visual conditions. DPC-Net achieves the best overall separation performance across LRS2, LRS3, and VoxCeleb2, while maintaining a moderate parameter and computation budget. The gains over the strongest controlled baselines are obtained with the same separation objective, decoding path, and waveform reconstruction module. This pattern indicates that readiness-guided correction improves the intermediate audio-visual representation formed by the encoder.

\begin{table*}[t]
\centering
\caption{
AVSS robustness evaluation under degraded visual input. Occlusion and noise+blur provide stress tests for unreliable visual evidence.
}
\label{tab:avss_robustness}
\setlength{\tabcolsep}{2.2pt}
\renewcommand{\arraystretch}{1.05}
\resizebox{\textwidth}{!}{%
\begin{tabular}{lcccccccccccccccccc}
\hline
\multirow{3}{*}{Method} &
\multicolumn{9}{c}{Occlusion} &
\multicolumn{9}{c}{Noise + Blur} \\
\cline{2-19}
& \multicolumn{3}{c}{LRS2}
& \multicolumn{3}{c}{LRS3}
& \multicolumn{3}{c}{VoxCeleb2}
& \multicolumn{3}{c}{LRS2}
& \multicolumn{3}{c}{LRS3}
& \multicolumn{3}{c}{VoxCeleb2} \\
\cline{2-19}
& SI-SNRi & SDRi & PESQ
& SI-SNRi & SDRi & PESQ
& SI-SNRi & SDRi & PESQ
& SI-SNRi & SDRi & PESQ
& SI-SNRi & SDRi & PESQ
& SI-SNRi & SDRi & PESQ \\
\hline
AV-ConvTasNet-LQ~\cite{ref64}
& 12.8 & 12.6 & 2.72
& 11.2 & 11.6 & 2.83
& 9.1 & 9.3 & 2.63
& 13.3 & 13.0 & 2.82
& 13.8 & 13.1 & 2.85
& 9.5 & 9.9 & 2.71 \\
MHSA-CRN~\cite{ref65}
& 12.5 & 12.9 & 2.88
& 11.8 & 12.1 & 2.99
& 10.5 & 10.6 & 2.74
& 13.8 & 13.4 & 2.79
& 13.1 & 13.0 & 2.77
& 9.2 & 9.8 & 2.66 \\
RAVSS~\cite{ref66}
& 13.1 & 13.9 & 3.00
& 14.3 & 14.5 & 3.03
& 11.7 & 11.8 & 2.93
& 13.8 & 14.1 & 3.01
& 14.0 & 14.3 & 3.04
& 12.0 & 12.2 & 2.99 \\
\textbf{DPC-Net (Ours)}
& \textbf{14.8} & \textbf{15.3} & \textbf{3.15}
& \textbf{16.4} & \textbf{17.2} & \textbf{3.18}
& \textbf{12.5} & \textbf{13.2} & \textbf{3.01}
& \textbf{14.9} & \textbf{15.9} & \textbf{3.17}
& \textbf{16.7} & \textbf{17.1} & \textbf{3.20}
& \textbf{13.2} & \textbf{13.7} & \textbf{3.09} \\
\hline
\end{tabular}}
\end{table*}

Table~\ref{tab:avss_robustness} evaluates robustness under visual degradation. DPC-Net consistently improves SI-SNRi, SDRi, and PESQ under occlusion and noise+blur across the evaluated datasets. These conditions weaken visual reliability and make local audio-visual agreement less stable as an indicator of later fusion support. The stronger results under degradation align with the readiness-control motivation: identifying and correcting the under-supported bottleneck helps maintain a useful representation for speech reconstruction.

\subsection{Segment-level Localization: AVEL}
\label{sec:avel_results}

\begin{table*}[t]
\centering
\caption{
AVEL evaluation on AVE in the fully supervised setting. Results are grouped by backbone configuration, and higher accuracy indicates better performance.
}
\label{tab:avel_results}
\setlength{\tabcolsep}{3.0pt}
\renewcommand{\arraystretch}{1.04}
\resizebox{\textwidth}{!}{%
\begin{tabular}{lccccccc}
\hline
Method & Visual Encoder & Audio Encoder & Visual Pretrain & Audio Pretrain & Trainable Params & Total Params & Acc. \(\uparrow\) \\
& & & & & (M) & (M) & (\%) \\
\hline
\multicolumn{8}{l}{\textit{CNN / conventional backbones}} \\
AVEL~\cite{ref67} & ResNet-152 & VGGish & ImageNet & AudioSet & 3.7 & 136.0 & 74.0 \\
AVSDN~\cite{ref68} & ResNet-152 & VGGish & ImageNet & AudioSet & 8.0 & 140.3 & 75.4 \\
CMRAN~\cite{ref69} & ResNet-152 & VGGish & ImageNet & AudioSet & 15.9 & 148.2 & 78.3 \\
MM-Pyramid~\cite{ref70} & ResNet-152 & VGGish & ImageNet & AudioSet & 44.0 & 176.3 & 77.8 \\
CMBS~\cite{ref71} & ResNet-152 & VGGish & ImageNet & AudioSet & 14.4 & 216.7 & 79.7 \\
\hline
\multicolumn{8}{l}{\textit{Shared-backbone transformer setting}} \\
LAVisH~\cite{ref12} & ViT-B-16 (shared) & -- & ImageNet & -- & 4.7 & 107.2 & 75.3 \\
AVMoE~\cite{ref13} & ViT-B-16 (shared) & -- & ImageNet & -- & 47.8 & 150.4 & 76.4 \\
\textbf{DPC-Net (Ours)} & ViT-B-16 (shared) & -- & ImageNet & -- & 5.9 & 108.4 & \textbf{77.8} \\
LAVisH~\cite{ref12} & ViT-L-16 (shared) & -- & ImageNet & -- & 14.5 & 340.1 & 78.1 \\
AVMoE~\cite{ref13} & ViT-L-16 (shared) & -- & ImageNet & -- & 147.7 & 483.1 & 79.2 \\
\textbf{DPC-Net (Ours)} & ViT-L-16 (shared) & -- & ImageNet & -- & 18.3 & 343.9 & \textbf{80.1} \\
LAVisH~\cite{ref12} & Swin-V2-B (shared) & -- & ImageNet & -- & 5.0 & 114.0 & 78.8 \\
AVMoE~\cite{ref13} & Swin-V2-B (shared) & -- & ImageNet & -- & 84.9 & 206.6 & 79.4 \\
\textbf{DPC-Net (Ours)} & Swin-V2-B (shared) & -- & ImageNet & -- & 6.6 & 115.6 & \textbf{80.7} \\
LAVisH~\cite{ref12} & Swin-V2-L (shared) & -- & ImageNet & -- & 10.1 & 238.8 & 81.1 \\
AVMoE\(\dagger\)~\cite{ref13} & Swin-V2-L (shared) & -- & ImageNet & -- & 147.5 & 347.4 & 81.0 \\
\textbf{DPC-Net (Ours)} & Swin-V2-L (shared) & -- & ImageNet & -- & 20.2 & 248.9 & \textbf{82.0} \\
\hline
\multicolumn{8}{l}{\textit{Separate audio-visual backbone setting}} \\
LAVisH~\cite{ref12} & Swin-V2-L & HTS-AT & ImageNet & AudioSet & 114.7 & 247.9 & 78.6 \\
DG-SCT\(\dagger\)~\cite{ref37} & Swin-V2-L & HTS-AT & ImageNet & AudioSet & 201.1 & 461.3 & 81.8 \\
AVMoE\(\dagger\)~\cite{ref13} & Swin-V2-L & HTS-AT & ImageNet & AudioSet & 141.0 & 404.0 & 82.1 \\
\textbf{DPC-Net (Ours)} & Swin-V2-L & HTS-AT & ImageNet & AudioSet & 125.8 & 393.2 & \textbf{83.3} \\
\hline
\end{tabular}}
\end{table*}

Table~\ref{tab:avel_results} reports segment-level localization results on AVE. DPC-Net improves accuracy across shared-backbone transformer settings and achieves the best result in the separate audio-visual backbone setting. The gain is obtained with fewer trainable parameters than AVMoE in matched settings, indicating that the improvement comes from readiness-guided bottleneck correction rather than a larger expert or routing capacity. Since AVEL uses segment-level categorical supervision, these results complement the AVSS reconstruction evaluation.

\subsection{Sequence-level Recognition: AVSR}
\label{sec:avsr_results}

\begin{table}[t]
\centering
\caption{
AVSR evaluation under varying acoustic conditions. Lower WER indicates better performance.
}
\label{tab:avsr_acoustic}
\setlength{\tabcolsep}{2.5pt}
\renewcommand{\arraystretch}{1.03}
\resizebox{\linewidth}{!}{%
\begin{tabular}{llcccccc}
\hline
Input & Method & \(-5\) dB & 0 dB & 5 dB & 10 dB & Clean & AVG \\
\hline
\multicolumn{8}{l}{\textit{LRS2}} \\
A & ASR~\cite{ref72} & 29.1 & 10.2 & 8.7 & 7.4 & 4.9 & 12.1 \\
A & AVEC~\cite{ref73} & 70.5 & 27.1 & 8.6 & 7.6 & 3.1 & 19.5 \\
AV & Conformer~\cite{ref72} & 24.9 & 16.5 & 10.8 & 7.8 & 4.6 & 12.9 \\
AV & V-CAFE~\cite{ref74} & 22.4 & 11.0 & 6.4 & 5.5 & 4.3 & 9.9 \\
AV & AVEC~\cite{ref73} & 9.7 & 5.0 & 3.4 & 2.8 & 2.6 & 4.7 \\
AV & AV-Relscore\(\dagger\)~\cite{ref53} & 10.8 & 5.9 & 4.9 & 4.0 & 3.9 & 5.9 \\
AV & A+VH~\cite{ref75} & 12.6 & 7.1 & 3.9 & 3.1 & 2.6 & 5.9 \\
AV & AD-AVSR~\cite{ref76} & 9.4 & 6.0 & 3.6 & 2.8 & 2.4 & 4.8 \\
AV & \textbf{DPC-Net (Ours)} & \textbf{9.0} & 5.8 & \textbf{3.3} & \textbf{2.6} & \textbf{2.3} & \textbf{4.6} \\
\hline
\multicolumn{8}{l}{\textit{LRS3}} \\
A & ASR~\cite{ref72} & -- & -- & -- & -- & -- & -- \\
A & AVEC~\cite{ref73} & 75.9 & 32.4 & 9.3 & 4.1 & 2.3 & 20.7 \\
AV & Conformer~\cite{ref72} & 22.3 & 14.6 & 8.3 & 5.4 & 3.2 & 10.8 \\
AV & V-CAFE~\cite{ref74} & 19.3 & 12.5 & 8.4 & 4.0 & 2.9 & 9.4 \\
AV & AV-Hubert~\cite{ref77} & 16.6 & 5.8 & 2.6 & 2.1 & 2.0 & 5.8 \\
AV & AVEC~\cite{ref73} & 11.2 & 4.9 & 3.1 & 2.5 & 2.0 & 4.7 \\
AV & AV-Relscore\(\dagger\)~\cite{ref53} & 8.3 & 4.6 & 3.0 & 2.7 & 2.6 & 4.2 \\
AV & A+VH~\cite{ref75} & 14.3 & 6.4 & 3.4 & 2.2 & 2.2 & 5.7 \\
AV & AD-AVSR~\cite{ref76} & 8.2 & 4.7 & 3.2 & 2.1 & \textbf{2.0} & 4.0 \\
AV & \textbf{DPC-Net (Ours)} & \textbf{7.8} & \textbf{4.5} & 3.3 & \textbf{2.0} & \textbf{2.0} & \textbf{3.9} \\
\hline
\end{tabular}}
\end{table}

Table~\ref{tab:avsr_acoustic} reports AVSR results under clean and noisy acoustic conditions. DPC-Net achieves the lowest average WER on both LRS2 and LRS3, with clear gains under severe acoustic noise. The downstream Conformer encoder, CTC branch, Transformer decoder, and recognition supervision are preserved, so the reduction in WER reflects improved audio-visual representations delivered by the encoder.

\begin{table}[t]
\centering
\caption{
AVSR robustness evaluation under joint audio-visual corruption on LRS2. Lower WER indicates better performance.
}
\label{tab:avsr_joint_corruption}
\setlength{\tabcolsep}{2.8pt}
\renewcommand{\arraystretch}{1.03}
\resizebox{\linewidth}{!}{%
\begin{tabular}{lcccccc}
\hline
Method & \(-5\) dB & 0 dB & 5 dB & 10 dB & Clean & AVG \\
\hline
\multicolumn{7}{l}{\textit{Occlusion}} \\
Conformer~\cite{ref72} & 25.1 & 16.6 & 10.8 & 8.0 & 4.9 & 13.1 \\
V-CAFE~\cite{ref74} & 22.4 & 11.3 & 6.5 & 5.7 & 4.4 & 10.1 \\
AV-Relscore~\cite{ref53} & 11.3 & 6.4 & 5.2 & 4.4 & 4.2 & 6.3 \\
AV-Relscore\(\dagger\)~\cite{ref53} & 10.8 & 6.1 & 5.0 & 4.1 & 4.0 & 6.0 \\
\textbf{DPC-Net (Ours)} & \textbf{9.1} & \textbf{6.1} & \textbf{3.7} & \textbf{2.8} & \textbf{2.4} & \textbf{4.8} \\
\hline
\multicolumn{7}{l}{\textit{Noise + Blur}} \\
Conformer~\cite{ref72} & 25.7 & 16.8 & 10.7 & 7.8 & 4.8 & 13.2 \\
V-CAFE~\cite{ref74} & 22.8 & 11.4 & 6.4 & 5.6 & 4.9 & 10.2 \\
AV-Relscore\(\dagger\)~\cite{ref53} & 10.7 & \textbf{6.1} & 4.9 & 4.0 & 4.2 & 6.1 \\
AD-AVSR~\cite{ref76} & 9.5 & 6.3 & 3.7 & 3.2 & 2.8 & 5.1 \\
\textbf{DPC-Net (Ours)} & \textbf{9.0} & 6.3 & \textbf{3.5} & \textbf{2.9} & \textbf{2.6} & \textbf{4.9} \\
\hline
\end{tabular}}
\end{table}

Table~\ref{tab:avsr_joint_corruption} evaluates AVSR under joint audio-visual corruption. DPC-Net achieves the lowest average WER under both occlusion and noise+blur, and remains best or competitive across acoustic conditions. These settings combine unreliable acoustic evidence with degraded visual cues, making them a strong test of whether readiness-guided correction can maintain useful encoder representations.

The AVSR results complement the AVSS and AVEL evaluations. AVSS tests signal-level reconstruction, AVEL tests segment-level classification, and AVSR tests sequence-level linguistic prediction. Across these output regimes, DPC-Net consistently acts on the shared encoder-level fusion process, supporting the generality of propagation-aware readiness control.

\subsection{Statistical Reliability}
\label{sec:statistical_reliability}

\begin{table}[t]
\centering
\caption{
Statistical reliability on representative controlled settings. Mean and standard deviation are reported over five runs.
}
\label{tab:statistical_reliability}
\setlength{\tabcolsep}{3.2pt}
\renewcommand{\arraystretch}{1.05}
\resizebox{\linewidth}{!}{%
\begin{tabular}{llccc}
\hline
Task & Setting & Compared Method & Compared Result & DPC-Net Result \\
\hline
AVSS & LRS2 Clean & AV-CrossNet & \(16.46\pm0.21\) & \(\mathbf{16.84\pm0.18}\) \\
AVSS & LRS2 Occlusion & RAVSS & \(13.17\pm0.16\) & \(\mathbf{14.82\pm0.13}\) \\
AVSS & LRS2 Noise+Blur & RAVSS & \(13.81\pm0.14\) & \(\mathbf{14.91\pm0.11}\) \\
AVEL & Swin-V2-L+HTS-AT & AVMoE & \(82.13\pm0.10\) & \(\mathbf{83.28\pm0.11}\) \\
AVSR & LRS2 \(-5\) dB & AD-AVSR & \(9.44\pm0.13\) & \(\mathbf{9.05\pm0.12}\) \\
\hline
\end{tabular}}
\end{table}

Table~\ref{tab:statistical_reliability} reports repeated-run results on representative controlled settings. DPC-Net maintains stable gains over the strongest compared methods across clean, visually degraded, and low-SNR conditions. The gains are especially clear under degraded visual input and low-SNR recognition, where unreliable evidence makes readiness assessment more important. These results indicate that the main improvements are stable across repeated runs. Across the three output regimes, the improvements show a consistent pattern. AVSS evaluates whether readiness-guided correction benefits signal-level reconstruction, AVEL evaluates whether it improves segment-level event reasoning, and AVSR evaluates whether it supports sequence-level linguistic prediction. Since DPC-Net is inserted into the audio-visual encoder in all tasks, these results indicate that the proposed mechanism improves the shared stage-wise representation process rather than relying on a task-specific output module. The stronger gains under visual degradation and low-SNR conditions further suggest that readiness control is especially useful when local cross-modal agreement becomes less stable and additional support evidence is needed for later propagation.

\section{Analysis of Representation Readiness}
\label{sec:mechanism_analysis}

This section evaluates whether the learned readiness-deficiency surrogate \(\widehat{D}_l\) behaves as an intervention-oriented bottleneck score. The analysis focuses on four questions: whether the two components of DPC-Net are necessary, whether the selected stage can be replaced by simpler criteria, whether the selected bottleneck is more sensitive and recoverable under intervention, and whether the internal readiness cues show interpretable trajectories under degraded input.

\subsection{Component and Selection-Criterion Analysis}
\label{sec:component_selection_analysis}

We first examine the contributions of commitment assessment (CA) and support-aware bottleneck completion (BC). Table~\ref{tab:component_analysis} reports results across AVSS, AVEL, and AVSR.

\begin{table}[t]
\centering
\caption{
Component analysis across AVSS, AVEL, and AVSR. CA and BC denote commitment assessment and support-aware bottleneck completion, respectively.
}
\label{tab:component_analysis}
\setlength{\tabcolsep}{4.5pt}
\renewcommand{\arraystretch}{1.05}
\resizebox{\linewidth}{!}{%
\begin{tabular}{lccccc c ccc}
\hline
\multirow{2}{*}{Variant} &
\multicolumn{2}{c}{Component} &
\multicolumn{3}{c}{AVSS SI-SNRi \(\uparrow\)} &
\multirow{2}{*}{AVEL Acc. \(\uparrow\)} &
\multicolumn{3}{c}{AVSR WER \(\downarrow\)} \\
\cline{2-6} \cline{8-10}
& CA & BC & Clean & Occ & N+B &  & Clean & Occ & N+B \\
\hline
\multicolumn{10}{l}{\textit{Main ablation of DPC-Net}} \\
DPC-Net (full)       & \checkmark & \checkmark & 16.8 & 14.8 & 14.9 & 83.3 & 4.6 & 4.8 & 4.9 \\
w/o CA               & \(\times\)  & \checkmark & 13.8 & 12.0 & 12.3 & 81.7 & 6.0 & 6.8 & 6.5 \\
w/o BC               & \checkmark & \(\times\)  & 14.4 & 12.9 & 12.5 & 82.2 & 5.7 & 6.4 & 6.6 \\
w/o CA and BC        & \(\times\)  & \(\times\)  & 12.8 & 9.9  & 10.1 & 80.4 & 9.5 & 10.8 & 10.5 \\
\hline
\multicolumn{10}{l}{\textit{Attention-only alternatives}} \\
Cross attention      & -- & -- & 13.4 & 11.4 & 11.6 & 81.9 & 8.2 & 9.3 & 9.0 \\
Pooling attention    & -- & -- & 14.1 & 12.3 & 12.4 & 82.1 & 7.5 & 8.6 & 8.3 \\
Channel attention    & -- & -- & 13.1 & 10.8 & 11.1 & 81.5 & 8.8 & 9.9 & 9.6 \\
\hline
\end{tabular}}
\end{table}

Table~\ref{tab:component_analysis} shows that CA and BC play complementary roles. Removing CA weakens performance because support correction loses the readiness-guided intervention target. Removing BC also reduces performance, showing that locating the vulnerable stage should be followed by support-aware correction. Attention-only alternatives also underperform the full model, indicating that the improvement comes from readiness-guided bottleneck localization and targeted correction rather than generic interaction enhancement. We further evaluate whether \(\widehat{D}_l\) can be replaced by simpler selection criteria. All variants in Fig.~\ref{fig:selection_criterion} use the same support-aware completion module and differ only in bottleneck selection.

\begin{figure}[t]
    \centering
    \includegraphics[width=\linewidth]{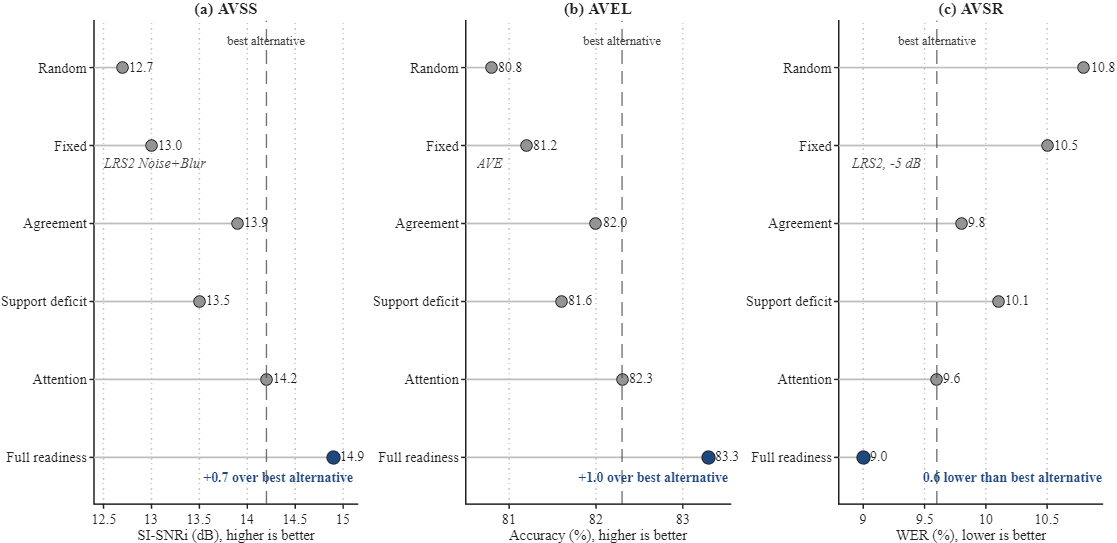}
    \caption{
    Selection-criterion validation. All variants use the same support-aware completion module and differ only in bottleneck selection. The dashed line denotes the best non-full criterion in each panel.
    }
    \label{fig:selection_criterion}
\end{figure}

Fig.~\ref{fig:selection_criterion} compares full readiness selection with random, fixed, agreement-only, support-deficit-only, and attention-response selection. Random and fixed selection remove input-dependent readiness estimation. Agreement-only selection focuses on current correspondence, support-deficit selection prioritizes weakly supported stages, and attention-response selection reflects interaction magnitude. The full readiness score performs best across the representative settings, reaching \(14.9\) dB SI-SNRi on AVSS under noise+blur, \(83.3\%\) accuracy on AVEL, and \(9.0\%\) WER on AVSR under \(-5\) dB acoustic noise. These results show that effective bottleneck localization requires the joint consideration of agreement, downstream anchoring, and support insufficiency.

\subsection{Counterfactual Intervention}
\label{sec:counterfactual_intervention}

We next examine whether the stage selected by \(\widehat{D}_l\) is more intervention-sensitive than control stages. We freeze the trained model and apply a matched-energy perturbation to three candidate targets: the selected bottleneck, its adjacent stage, and a random stage. For a candidate stage \(l\), the perturbation is
\begin{equation}
\Pi_l(z_l)
=
z_l
+
\epsilon
\frac{\|z_l\|_F}{\|\xi_l\|_F+\epsilon_0}
\xi_l,
\qquad
\xi_l \sim \mathcal{N}(0,I),
\label{eq:counterfactual_perturbation}
\end{equation}
where \(\epsilon=0.1\) and \(\epsilon_0=10^{-6}\). Degradation is measured as the decrease in SI-SNRi and accuracy for AVSS and AVEL, and as the increase in WER for AVSR. Recovery is measured as the performance regained after applying support-aware completion to the perturbed target.

\begin{figure}[t]
    \centering
    \includegraphics[width=\linewidth]{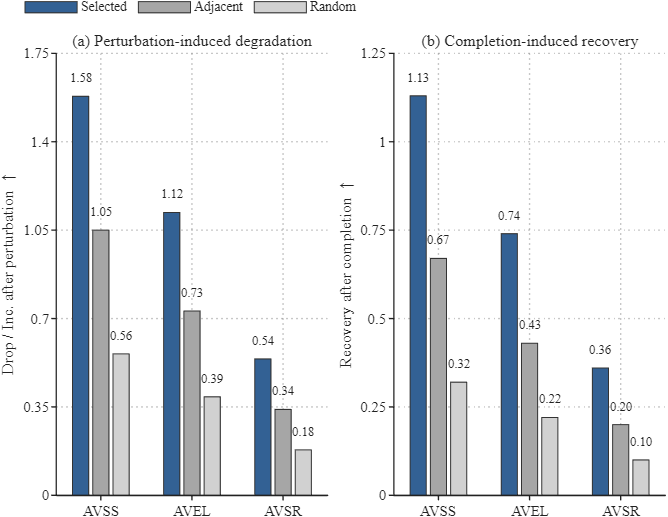}
    \caption{
    Counterfactual intervention on selected, adjacent, and random targets.
    \textbf{(a)} Perturbation-induced degradation.
    \textbf{(b)} Completion-induced recovery.
    Values are compared within each task because the metrics have different units and directions.
    }
    \label{fig:counterfactual_intervention}
\end{figure}

Fig.~\ref{fig:counterfactual_intervention} shows that the selected bottleneck is both more sensitive to perturbation and more responsive to completion. Perturbing the selected target causes the largest degradation, with a \(1.58\) dB SI-SNRi drop on AVSS, a \(1.12\) percentage-point accuracy drop on AVEL, and a \(0.54\) percentage-point WER increase on AVSR. Applying support-aware completion to the selected target also yields the strongest recovery, reaching \(1.13\) dB on AVSS, \(0.74\) percentage-point accuracy on AVEL, and \(0.36\) percentage-point WER on AVSR. This pattern supports the interpretation that \(\widehat{D}_l\) localizes a stage where intervention is more disruptive when perturbed and more beneficial when corrected.

\subsection{Readiness Trajectory under Degradation}
\label{sec:readiness_trajectory}

We further examine whether the internal cues of the readiness formulation show interpretable stage-wise behavior under degraded visual conditions. While the counterfactual analysis evaluates where intervention is most harmful and recoverable, this trajectory analysis examines how \(\widehat{A}_l\), \(\widehat{P}_l\), \(\widehat{C}_l\), and \(\widehat{D}_l\) evolve across stages and after support-aware completion.

\begin{figure*}[t]
    \centering
    \includegraphics[width=\textwidth]{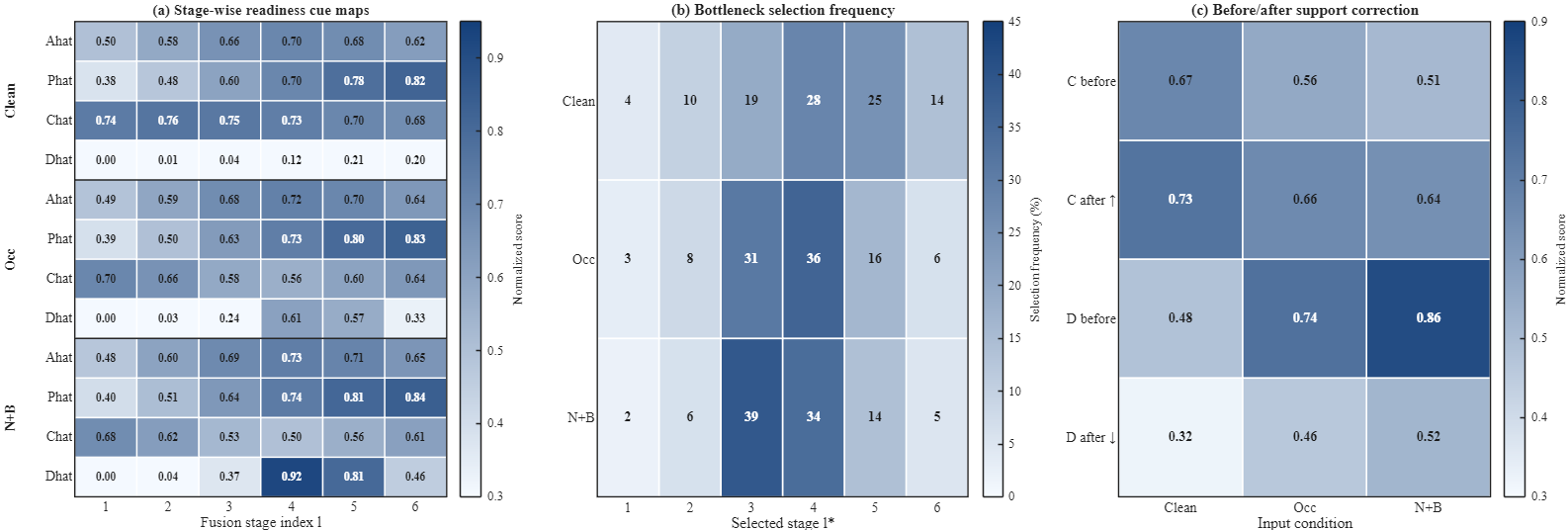}
    \caption{
    Readiness trajectory under visual degradation.
    \textbf{(a)} Stage-wise maps of \(\widehat{A}_l\), \(\widehat{P}_l\), \(\widehat{C}_l\), and \(\widehat{D}_l\).
    \textbf{(b)} Distribution of selected bottleneck stages.
    \textbf{(c)} Before/after support-aware completion at \(l^\star\).
    }
    \label{fig:readiness_trajectory}
\end{figure*}

Fig.~\ref{fig:readiness_trajectory}(a) shows the stage-wise maps of the four readiness-related quantities. Under clean input, \(\widehat{C}_l\) remains relatively stable and \(\widehat{D}_l\) stays low or moderate. Under occlusion and noise+blur, \(\widehat{C}_l\) decreases more clearly at intermediate stages, while \(\widehat{A}_l\) and \(\widehat{P}_l\) can remain high. This pattern is consistent with premature commitment: a stage may appear locally plausible and downstream-anchored even when its support coverage becomes insufficient, causing \(\widehat{D}_l\) to concentrate where agreement, downstream anchoring, and support insufficiency coexist.

Fig.~\ref{fig:readiness_trajectory}(b) shows that bottleneck selection is input-dependent. Under clean input, the selected stages are more broadly distributed. Under degraded visual conditions, the selections become more concentrated around intermediate fusion stages. Fig.~\ref{fig:readiness_trajectory}(c) further shows that support-aware correction increases \(\widehat{C}_{l^\star}\) and reduces \(\widehat{D}_{l^\star}\) at the selected stage. These results connect the internal trajectory of the readiness cues with the intervention behavior in Fig.~\ref{fig:counterfactual_intervention}.

\subsection{Implementation Design Analysis}
\label{sec:implementation_summary}

We further examine representative implementation choices of commitment assessment and support-aware bottleneck completion. The goal is to verify whether the practical design of DPC-Net follows the readiness-control principle, rather than relying on a single arbitrary implementation choice. Table~\ref{tab:implementation_design_compact} reports compact ablations on LRS2 under clean visuals, patch occlusion, and noise+blur.

\begin{table}[t]
\centering
\caption{
Compact implementation ablations on LRS2. SI-SNRi (dB, \(\uparrow\)) is reported under clean visuals, patch occlusion (Occ), and noise+blur (N+B). A and V denote audio-side and visual-side cues or support branches.
}
\label{tab:implementation_design_compact}
\setlength{\tabcolsep}{4.2pt}
\renewcommand{\arraystretch}{1.05}
\resizebox{\linewidth}{!}{%
\begin{tabular}{llccc}
\hline
Group & Variant & Clean & Occ & N+B \\
\hline
\multirow{3}{*}{Routing}
& Hard train + hard infer & 16.0 & 13.3 & 13.0 \\
& Soft train + soft infer & 16.3 & 14.4 & 14.2 \\
& Soft train + hard infer & \textbf{16.8} & \textbf{14.8} & 14.9 \\
\hline
\multirow{4}{*}{Assessment cue}
& Learned A + learned V & \textbf{16.8} & \textbf{14.8} & \textbf{14.9} \\
& Learned A + random V & 15.7 & 13.2 & 13.0 \\
& Random A + learned V & 15.3 & 12.9 & 12.7 \\
& Random A + random V & 12.5 & 9.0 & 9.2 \\
\hline
\multirow{3}{*}{Corrected stage}
& Learned single-stage & 16.8 & 14.8 & 14.9 \\
& Top-2 simultaneous & 16.9 & 14.6 & 14.8 \\
& Iter-2 sequential & \textbf{17.1} & \textbf{14.9} & \textbf{15.3} \\
\hline
\multirow{4}{*}{Support branch}
& Full A-support + full V-support & \textbf{16.8} & \textbf{14.8} & \textbf{14.9} \\
& V-support only & 15.9 & 13.5 & 13.5 \\
& A-support only & 16.0 & 13.4 & 13.6 \\
& No A/V support & 14.6 & 11.6 & 11.8 \\
\hline
\multirow{3}{*}{Support design}
& Full completion module & \textbf{16.8} & \textbf{14.8} & \textbf{14.9} \\
& w/o cross-modal interaction & 16.2 & 13.9 & 13.8 \\
& w/o global aggregation & 15.6 & 12.9 & 12.7 \\
\hline
\end{tabular}}
\end{table}

Table~\ref{tab:implementation_design_compact} shows that differentiable routing during training and hard bottleneck selection during inference provide the strongest default configuration. The assessment-cue ablations further show that readiness localization benefits from input-dependent evidence from both modalities. Replacing either side with random cues reduces performance, and replacing both sides causes a much larger drop. Iterative two-stage correction can provide slightly higher SI-SNRi, especially under noise+blur, but it requires an additional assessment-completion pass. We therefore use learned single-stage intervention as the default design for a better balance between effectiveness and efficiency.

The support-branch and support-design ablations show that the selected bottleneck benefits from both audio-side and visual-side support. One-sided support remains useful, while removing both support branches produces a large degradation. Removing cross-modal support interaction or global support aggregation also weakens performance, indicating that the correction should be constructed from refined all-stage audio-visual support. These results support the implementation choice of DPC-Net: estimate an intervention-sensitive bottleneck and correct it through targeted, support-aware residual updating.

\section{Conclusion}
\label{sec:conclusion}

This work studied stage-wise audio-visual learning from the perspective of propagation-aware representation readiness. We formulated premature perceptual commitment as a readiness-deficiency problem, where local plausibility, propagation influence, and support insufficiency jointly appear at an intermediate fusion stage. Based on this formulation, we proposed DPC-Net, an encoder-level readiness-control framework that estimates an observable readiness-deficiency surrogate, localizes the intervention-sensitive bottleneck, and applies support-aware correction using cross-layer and cross-modal evidence.

Experiments across AVSS, AVEL, and AVSR show that DPC-Net improves audio-visual representation learning under reconstruction, localization, and recognition regimes while preserving task-specific heads, losses, and decoding modules. Mechanism-oriented analyses further show that the learned surrogate selects more effective bottleneck stages than simpler criteria, identifies representations that are sensitive to perturbation and responsive to completion, and produces interpretable readiness trajectories under degraded visual conditions. These results support readiness-guided bottleneck correction as a useful principle for stage-wise audio-visual fusion.

\balance
\bibliographystyle{IEEEbib}
\bibliography{refs}

@article{ref1,
  title={Deep audio-visual learning: A survey},
  author={Zhu, Hao and Luo, Man-Di and Wang, Rui and Zheng, Ai-Hua and He, Ran},
  journal={International Journal of Automation and Computing},
  volume={18},
  number={3},
  pages={351--376},
  year={2021},
  publisher={Springer}
}

@article{ref2,
  title={Deep Learning for Visual Speech Analysis: A Survey},
  author={Sheng, Changchong and Kuang, Gangyao and Bai, Liang and Hou, Chenping and Guo, Yulan and Xu, Xin and Pietikainen, Matti and Liu, Li},
  journal={IEEE Transactions on Pattern Analysis and Machine Intelligence},
  volume={46},
  number={09},
  pages={6001--6022},
  year={2024},
  publisher={IEEE Computer Society}
}

@article{ref3,
  title   = {Multimodal Alignment and Fusion: A Survey},
  author  = {Li, Songtao and Tang, Hao},
  journal = {International Journal of Computer Vision},
  volume  = {134},
  pages   = {103},
  year    = {2026},
  doi     = {10.1007/s11263-025-02667-1}
}

@article{ref4,
  title={Deep audio-visual speech recognition},
  author={Afouras, T and Chung, J and Senior, A and Vinyals, O and Zisserman, A},
  journal={IEEE Transactions on Pattern Analysis and Machine Intelligence},
  volume={44},
  number={12},
  year={2018},
  publisher={IEEE}
}

@inproceedings{ref5,
  title={Looking into your speech: Learning cross-modal affinity for audio-visual speech separation},
  author={Lee, Jiyoung and Chung, Soo-Whan and Kim, Sunok and Kang, Hong-Goo and Sohn, Kwanghoon},
  booktitle={Proceedings of the IEEE/CVF Conference on Computer Vision and Pattern Recognition},
  pages={1336--1345},
  year={2021}
}

@inproceedings{ref6,
  title={Audio-visual event localization via recursive fusion by joint co-attention},
  author={Duan, Bin and Tang, Hao and Wang, Wei and Zong, Ziliang and Yang, Guowei and Yan, Yan},
  booktitle={Proceedings of the IEEE/CVF winter conference on applications of computer vision},
  pages={4013--4022},
  year={2021}
}

@inproceedings{ref7,
  title={Mlca-avsr: Multi-layer cross attention fusion based audio-visual speech recognition},
  author={Wang, He and Guo, Pengcheng and Zhou, Pan and Xie, Lei},
  booktitle={ICASSP 2024-2024 IEEE International Conference on Acoustics, Speech and Signal Processing (ICASSP)},
  pages={8150--8154},
  year={2024},
  organization={IEEE}
}

@inproceedings{ref8,
  title={Learning event-specific localization preferences for audio-visual event localization},
  author={Ge, Shiping and Jiang, Zhiwei and Yin, Yafeng and Wang, Cong and Cheng, Zifeng and Gu, Qing},
  booktitle={Proceedings of the 31st ACM International Conference on Multimedia},
  pages={3446--3454},
  year={2023}
}

@article{ref10,
  title={Unified cross-modal attention: robust audio-visual speech recognition and beyond},
  author={Li, Jiahong and Li, Chenda and Wu, Yifei and Qian, Yanmin},
  journal={IEEE/ACM Transactions on Audio, Speech, and Language Processing},
  volume={32},
  pages={1941--1953},
  year={2024},
  publisher={IEEE}
}

@inproceedings{ref11,
  title={IIANet: an intra-and inter-modality attention network for audio-visual speech separation},
  author={Li, Kai and Yang, Runxuan and Sun, Fuchun and Hu, Xiaolin},
  booktitle={Proceedings of the 41st International Conference on Machine Learning},
  pages={29181--29200},
  year={2024}
}

@inproceedings{ref12,
  title={Vision transformers are parameter-efficient audio-visual learners},
  author={Lin, Yan-Bo and Sung, Yi-Lin and Lei, Jie and Bansal, Mohit and Bertasius, Gedas},
  booktitle={Proceedings of the IEEE/CVF Conference on Computer Vision and Pattern Recognition},
  pages={2299--2309},
  year={2023}
}

@article{ref13,
  title={Mixtures of experts for audio-visual learning},
  author={Cheng, Ying and Li, Yang and He, Junjie and Feng, Rui},
  journal={Advances in Neural Information Processing Systems},
  volume={37},
  pages={219--243},
  year={2024}
}

@inproceedings{ref14,
  title={Robust audio-visual segmentation via audio-guided visual convergent alignment},
  author={Liu, Chen and Li, Peike and Yang, Liying and Wang, Dadong and Li, Lincheng and Yu, Xin},
  booktitle={Proceedings of the Computer Vision and Pattern Recognition Conference},
  pages={28922--28931},
  year={2025}
}

@inproceedings{ref15,
  title={Rethink cross-modal fusion in weakly-supervised audio-visual video parsing},
  author={Xu, Yating and Hu, Conghui and Lee, Gim Hee},
  booktitle={Proceedings of the IEEE/CVF Winter Conference on Applications of Computer Vision},
  pages={5615--5624},
  year={2024}
}

@inproceedings{ref16,
  title={Positive sample propagation along the audio-visual event line},
  author={Zhou, Jinxing and Zheng, Liang and Zhong, Yiran and Hao, Shijie and Wang, Meng},
  booktitle={Proceedings of the IEEE/CVF conference on computer vision and pattern recognition},
  pages={8436--8444},
  year={2021}
}

@inproceedings{ref17,
  title={Dynamic cross attention for audio-visual person verification},
  author={Praveen, R Gnana and Alam, Jahangir},
  booktitle={2024 IEEE 18th International Conference on Automatic Face and Gesture Recognition (FG)},
  pages={1--5},
  year={2024},
  organization={IEEE}
}

@article{ref18,
  title={Avs-mamba: Exploring temporal and multi-modal mamba for audio-visual segmentation},
  author={Gong, Sitong and Zhuge, Yunzhi and Zhang, Lu and Wang, Yifan and Zhang, Pingping and Wang, Lijun and Lu, Huchuan},
  journal={IEEE Transactions on Multimedia},
  year={2025},
  publisher={IEEE}
}

@inproceedings{ref19,
  title={Progressive Homeostatic and Plastic Prompt Tuning for Audio-Visual Multi-Task Incremental Learning},
  author={Yin, Jiong and Li, Liang and Zhang, Jiehua and Gao, Yuhan and Yan, Chenggang and Sheng, Xichun},
  booktitle={Proceedings of the IEEE/CVF International Conference on Computer Vision},
  pages={2022--2033},
  year={2025}
}

@article{ref22,
  title={Contribution-aware Dynamic Multi-modal Balance for Audio-Visual Speech Separation},
  author={Xu, Xinmeng and Tu, Weiping and Yang, Yuhong and Li, Jizhen and Zhang, Yiqun and Chen, Hongyang},
  journal={IEEE Transactions on Multimedia},
  year={2026},
  publisher={IEEE}
}

@article{ref25,
  title={Efficient audio--visual information fusion using encoding pace synchronization for Audio--Visual Speech Separation},
  author={Xu, Xinmeng and Tu, Weiping and Yang, Yuhong},
  journal={Information Fusion},
  volume={115},
  pages={102749},
  year={2025},
  publisher={Elsevier}
}

@article{ref28,
  title={Bayesian causal inference: A unifying neuroscience theory},
  author={Shams, Ladan and Beierholm, Ulrik},
  journal={Neuroscience \& Biobehavioral Reviews},
  volume={137},
  pages={104619},
  year={2022},
  publisher={Elsevier}
}

@article{ref29,
  title={Older adults preserve audiovisual integration through enhanced cortical activations, not by recruiting new regions},
  author={Jones, Samuel A and Noppeney, Uta},
  journal={PLoS Biology},
  volume={22},
  number={2},
  pages={e3002494},
  year={2024},
  publisher={Public Library of Science San Francisco, CA USA}
}

@article{ref30,
  title={Causal inference in multisensory perception},
  author={K{\"o}rding, Konrad P and Beierholm, Ulrik and Ma, Wei Ji and Quartz, Steven and Tenenbaum, Joshua B and Shams, Ladan},
  journal={PLoS one},
  volume={2},
  number={9},
  pages={e943},
  year={2007},
  publisher={Public Library of Science San Francisco, USA}
}

@article{ref31,
  title={Cortical hierarchies perform Bayesian causal inference in multisensory perception},
  author={Rohe, Tim and Noppeney, Uta},
  journal={PLoS biology},
  volume={13},
  number={2},
  pages={e1002073},
  year={2015},
  publisher={Public Library of Science San Francisco, CA USA}
}

@article{ref32,
  title={Neural processing of asynchronous audiovisual speech perception},
  author={Stevenson, Ryan A and Altieri, Nicholas A and Kim, Sunah and Pisoni, David B and James, Thomas W},
  journal={Neuroimage},
  volume={49},
  number={4},
  pages={3308--3318},
  year={2010},
  publisher={Elsevier}
}

@article{ref33,
  title={The neural dynamics of hierarchical Bayesian causal inference in multisensory perception},
  author={Rohe, Tim and Ehlis, Ann-Christine and Noppeney, Uta},
  journal={Nature communications},
  volume={10},
  number={1},
  pages={1907},
  year={2019},
  publisher={Nature Publishing Group UK London}
}

@article{ref34,
  title={A causal inference model explains perception of the McGurk effect and other incongruent audiovisual speech},
  author={Magnotti, John F and Beauchamp, Michael S},
  journal={PLoS computational biology},
  volume={13},
  number={2},
  pages={e1005229},
  year={2017},
  publisher={Public Library of Science San Francisco, CA USA}
}

@article{ref36,
  title={The role of conflict processing in multisensory perception: behavioural and electroencephalography evidence},
  author={Marly, Adri{\`a} and Yazdjian, Arek and Soto-Faraco, Salvador},
  journal={Philosophical Transactions of the Royal Society B: Biological Sciences},
  volume={378},
  number={1886},
  year={2023},
  publisher={The Royal Society}
}

@article{ref37,
  title={Causal inference in the multisensory brain},
  author={Cao, Yinan and Summerfield, Christopher and Park, Hame and Giordano, Bruno Lucio and Kayser, Christoph},
  journal={Neuron},
  volume={102},
  number={5},
  pages={1076--1087},
  year={2019},
  publisher={Elsevier}
}

@inproceedings{ref38,
  title={Audio-visual event localization in unconstrained videos},
  author={Tian, Yapeng and Shi, Jing and Li, Bochen and Duan, Zhiyao and Xu, Chenliang},
  booktitle={Proceedings of the European conference on computer vision (ECCV)},
  pages={247--263},
  year={2018}
}

@inproceedings{ref39,
  title={Dual Attention Matching for Audio-Visual Event Localization},
  author={Wu, Yu and Zhu, Linchao and Yan, Yan and Yang, Yi},
  booktitle={2019 IEEE/CVF International Conference on Computer Vision (ICCV)},
  pages={6291--6299},
  year={2019},
  organization={IEEE Computer Society}
}

@article{ref40,
  title={Looking to listen at the cocktail party: a speaker-independent audio-visual model for speech separation},
  author={Ephrat, Ariel and Mosseri, Inbar and Lang, Oran and Dekel, Tali and Wilson, Kevin and Hassidim, Avinatan and Freeman, William T and Rubinstein, Michael},
  journal={ACM Transactions on Graphics (TOG)},
  volume={37},
  number={4},
  pages={1--11},
  year={2018},
  publisher={ACM New York, NY, USA}
}

@inproceedings{ref41,
  title={End-to-End Audiovisual Speech Recognition},
  author={Petridis, Stavros and Stafylakis, Themos and Ma, Pingehuan and Cai, Feipeng and Tzimiropoulos, Georgios and Pantic, Maja},
  booktitle={ICASSP 2018-2018 IEEE International Conference on Acoustics, Speech and Signal Processing (ICASSP)},
  pages={6548--6552},
  year={2018},
  organization={IEEE}
}

@inproceedings{ref42,
  author    = {Wang, Kai and Mo, Shentong and Tian, Yapeng and Hatzinakos, Dimitrios},
  title     = {Prompt Image to Watch and Hear: Multimodal Prompting for Parameter-Efficient Audio-Visual Learning},
  booktitle = {36th British Machine Vision Conference (BMVC)},
  year      = {2025},
  publisher = {BMVA},
  url       = {https://bmva-archive.org.uk/bmvc/2025/assets/papers/Paper_949/paper.pdf}
}

@inproceedings{ref43,
  title={FaceFilter: Audio-Visual Speech Separation Using Still Images},
  author={Chung, Soo-Whan and Choe, Soyeon and Chung, Joon Son and Kang, Hong-Goo},
  booktitle={Proc. Interspeech 2020},
  pages={3481--3485},
  year={2020}
}

@inproceedings{ref44,
  title={VisualVoice: Audio-Visual Speech Separation With Cross-Modal Consistency},
  author={Gao, Ruohan and Grauman, Kristen},
  booktitle={Proceedings of the IEEE/CVF Conference on Computer Vision and Pattern Recognition},
  pages={15495--15505},
  year={2021}
}

@Article{ref45,
AUTHOR = {Yu, Wentao and Zeiler, Steffen and Kolossa, Dorothea},
TITLE = {Reliability-Based Large-Vocabulary Audio-Visual Speech Recognition},
JOURNAL = {Sensors},
VOLUME = {22},
YEAR = {2022},
NUMBER = {15},
ARTICLE-NUMBER = {5501},
URL = {https://www.mdpi.com/1424-8220/22/15/5501},
PubMedID = {35898005},
ISSN = {1424-8220}
}

@ARTICLE{ref46,
  author={Wu, Wenxuan and Chen, Xueyuan and Wang, Shuai and Wang, Jiadong and Meng, Lingwei and Wu, Xixin and Meng, Helen and Li, Haizhou},
  journal={IEEE Journal of Selected Topics in Signal Processing}, 
  title={$C^{2}$AV-TSE: Context and Confidence-Aware Audio Visual Target Speaker Extraction}, 
  year={2025},
  volume={19},
  number={4},
  pages={646-657},
  doi={10.1109/JSTSP.2025.3560513}}

@inproceedings{ref47,
 author = {Duan, Haoyi and Xia, Yan and Mingze, Zhou and Tang, Li and Zhu, Jieming and Zhao, Zhou},
 booktitle = {Advances in Neural Information Processing Systems},
 editor = {A. Oh and T. Naumann and A. Globerson and K. Saenko and M. Hardt and S. Levine},
 pages = {56075--56094},
 publisher = {Curran Associates, Inc.},
 title = {Cross-modal Prompts: Adapting Large Pre-trained Models for Audio-Visual Downstream Tasks},
 url = {https://proceedings.neurips.cc/paper_files/paper/2023/file/af01716e08073368a7c8a62be46dba17-Paper-Conference.pdf},
 volume = {36},
 year = {2023}
}

@InProceedings{ref48,
    author    = {Wang, Kai and Tian, Yapeng and Hatzinakos, Dimitrios},
    title     = {Towards Efficient Audio-Visual Learners via Empowering Pre-trained Vision Transformers with Cross-Modal Adaptation},
    booktitle = {Proceedings of the IEEE/CVF Conference on Computer Vision and Pattern Recognition (CVPR) Workshops},
    month     = {June},
    year      = {2024},
    pages     = {1837-1846}
}

@article{ref49,
author = {Noppeney, Uta and Lee, Hwee Ling},
title = {Causal inference and temporal predictions in audiovisual perception of speech and music},
journal = {Annals of the New York Academy of Sciences},
volume = {1423},
number = {1},
pages = {102-116},
doi = {https://doi.org/10.1111/nyas.13615},
url = {https://nyaspubs.onlinelibrary.wiley.com/doi/abs/10.1111/nyas.13615},
eprint = {https://nyaspubs.onlinelibrary.wiley.com/doi/pdf/10.1111/nyas.13615},
year = {2018}
}

@article{ref50,
  title={Reliability-Weighted Integration of Audiovisual Signals Can Be Modulated by Top-down Attention},
  author={Noppeney, Uta and Rohe, Tim},
  journal={eNeuro},
  volume={5},
  number={1},
  pages={e0315--17},
  year={2018},
  publisher={Society for Neuroscience}
}

@article{ref51,
  title={Explicit estimation of magnitude and phase spectra in parallel for high-quality speech enhancement},
  author={Lu, Ye-Xin and Ai, Yang and Ling, Zhen-Hua},
  journal={Neural Networks},
  volume={189},
  pages={107562},
  year={2025},
  publisher={Elsevier}
}

@article{ref53,
  title={{LRS3-TED}: a large-scale dataset for visual speech recognition},
  author={Afouras, Triantafyllos and Chung, Joon Son and Zisserman, Andrew},
  journal={arXiv preprint arXiv: 1809.00496},
  year={2018}
}

@inproceedings{ref54,
  title     = {VoxCeleb2: Deep Speaker Recognition},
  author    = {Joon Son Chung and Arsha Nagrani and Andrew Zisserman},
  year      = {2018},
  booktitle = {Interspeech 2018},
  pages     = {1086--1090},
  doi       = {10.21437/Interspeech.2018-1929},
  issn      = {2958-1796},
}

@ARTICLE{ref56,
  author={Kalkhorani, Vahid Ahmadi and Yu, Cheng and Kumar, Anurag and Tan, Ke and Xu, Buye and Wang, DeLiang},
  journal={IEEE Journal of Selected Topics in Signal Processing}, 
  title={AV-CrossNet: An Audiovisual Complex Spectral Mapping Network for Speech Separation by Leveraging Narrow- and Cross-Band Modeling}, 
  year={2025},
  volume={19},
  number={4},
  pages={685-694},
  doi={10.1109/JSTSP.2025.3567838}}

@INPROCEEDINGS{ref62,
  author={Gemmeke, Jort F. and Ellis, Daniel P. W. and Freedman, Dylan and Jansen, Aren and Lawrence, Wade and Moore, R. Channing and Plakal, Manoj and Ritter, Marvin},
  booktitle={2017 IEEE International Conference on Acoustics, Speech and Signal Processing (ICASSP)}, 
  title={Audio Set: An ontology and human-labeled dataset for audio events}, 
  year={2017},
  volume={},
  number={},
  pages={776-780},
  doi={10.1109/ICASSP.2017.7952261}}

@InProceedings{ref63,
    author    = {Xia, Yan and Zhao, Zhou},
    title     = {Cross-Modal Background Suppression for Audio-Visual Event Localization},
    booktitle = {Proceedings of the IEEE/CVF Conference on Computer Vision and Pattern Recognition (CVPR)},
    month     = {June},
    year      = {2022},
    pages     = {19989-19998}
}

@ARTICLE{ref64,
  author={Li, Jiahong and Li, Chenda and Wu, Yifei and Qian, Yanmin},
  journal={IEEE/ACM Transactions on Audio, Speech, and Language Processing}, 
  title={Unified Cross-Modal Attention: Robust Audio-Visual Speech Recognition and Beyond}, 
  year={2024},
  volume={32},
  number={},
  pages={1941-1953},
  doi={10.1109/TASLP.2024.3375641}}

@inproceedings{ref65,
  title     = {Combining Residual Networks with LSTMs for Lipreading},
  author    = {Themos Stafylakis and Georgios Tzimiropoulos},
  year      = {2017},
  booktitle = {Interspeech 2017},
  pages     = {3652--3656},
  doi       = {10.21437/Interspeech.2017-85},
  issn      = {2958-1796},
}

@InProceedings{ref66,
author = {He, Kaiming and Zhang, Xiangyu and Ren, Shaoqing and Sun, Jian},
title = {Deep Residual Learning for Image Recognition},
booktitle = {Proceedings of the IEEE Conference on Computer Vision and Pattern Recognition (CVPR)},
month = {June},
year = {2016}
}

@InProceedings{ref67,
  title = 	 {Branchformer: Parallel {MLP}-Attention Architectures to Capture Local and Global Context for Speech Recognition and Understanding},
  author =       {Peng, Yifan and Dalmia, Siddharth and Lane, Ian and Watanabe, Shinji},
  booktitle = 	 {Proceedings of the 39th International Conference on Machine Learning},
  pages = 	 {17627--17643},
  year = 	 {2022},
  editor = 	 {Chaudhuri, Kamalika and Jegelka, Stefanie and Song, Le and Szepesvari, Csaba and Niu, Gang and Sabato, Sivan},
  volume = 	 {162},
  series = 	 {Proceedings of Machine Learning Research},
  month = 	 {17--23 Jul},
  publisher =    {PMLR}
}

@INPROCEEDINGS{ref68,
  author={Wang, He and Guo, Pengcheng and Zhou, Pan and Xie, Lei},
  booktitle={ICASSP 2024 - 2024 IEEE International Conference on Acoustics, Speech and Signal Processing (ICASSP)}, 
  title={MLCA-AVSR: Multi-Layer Cross Attention Fusion Based Audio-Visual Speech Recognition}, 
  year={2024},
  volume={},
  number={},
  pages={8150-8154},
  doi={10.1109/ICASSP48485.2024.10446769}}

@INPROCEEDINGS{ref69,
  author={Guo, Pengcheng and Wang, He and Mu, Bingshen and Zhang, Ao and Chen, Peikun},
  booktitle={ICASSP 2023 - 2023 IEEE International Conference on Acoustics, Speech and Signal Processing (ICASSP)}, 
  title={The NPU-ASLP System for Audio-Visual Speech Recognition in MISP 2022 Challenge}, 
  year={2023},
  volume={},
  number={},
  pages={1-2},
  doi={10.1109/ICASSP49357.2023.10097011}}

@article{ref70,
  title={Adam: A method for stochastic optimization},
  author={Kingma, Diederik P and Ba, Jimmy},
  journal={arXiv preprint arXiv:1412.6980},
  year={2014}
}

@INPROCEEDINGS{ref71,
  author={Wu, Jian and Xu, Yong and Zhang, Shi-Xiong and Chen, Lian-Wu and Yu, Meng and Xie, Lei and Yu, Dong},
  booktitle={2019 IEEE Automatic Speech Recognition and Understanding Workshop (ASRU)}, 
  title={Time Domain Audio Visual Speech Separation}, 
  year={2019},
  volume={},
  number={},
  pages={667-673},
  doi={10.1109/ASRU46091.2019.9003983}}

@inproceedings{ref72,
  title     = {Audio-Visual Speech Separation in Noisy Environments with a Lightweight Iterative Model},
  author    = {Héctor Martel and Julius Richter and Kai Li and Xiaolin Hu and Timo Gerkmann},
  year      = {2023},
  booktitle = {Interspeech 2023},
  pages     = {1673--1677},
  doi       = {10.21437/Interspeech.2023-1753},
  issn      = {2958-1796},
}

@inproceedings{ref73,
  title={{RTFS-Net: Recurrent Time-Frequency Modelling for Efficient Audio-Visual Speech Separation}},
  author={Pegg, Samuel and Li, Kai and Hu, Xiaolin},
  booktitle={The Twelfth International Conference on Learning Representations},
  year={2024}
}

@INPROCEEDINGS{ref74,
  author={Wu, Yifei and Li, Chenda and Bai, Jinfeng and Wu, Zhongqin and Qian, Yanmin},
  booktitle={ICASSP 2022 - 2022 IEEE International Conference on Acoustics, Speech and Signal Processing (ICASSP)}, 
  title={Time-Domain Audio-Visual Speech Separation on Low Quality Videos}, 
  year={2022},
  volume={},
  number={},
  pages={256-260},
  doi={10.1109/ICASSP43922.2022.9746866}}

@inproceedings{ref75,
  author={Xinmeng Xu and Yang Wang and Jie Jia and Binbin Chen and Dejun Li},
  title={{Improving Visual Speech Enhancement Network by Learning Audio-visual Affinity with Multi-head Attention}},
  year=2022,
  booktitle={Proc. Interspeech 2022},
  pages={971--975},
  doi={10.21437/Interspeech.2022-10041}
}

@inproceedings{ref76,
author = {Pan, Tianrui and Liu, Jie and Wang, Bohan and Tang, Jie and Wu, Gangshan},
title = {RAVSS: Robust Audio-Visual Speech Separation in Multi-Speaker Scenarios with Missing Visual Cues},
year = {2024},
isbn = {9798400706868},
publisher = {Association for Computing Machinery},
address = {New York, NY, USA},
url = {https://doi.org/10.1145/3664647.3681261},
doi = {10.1145/3664647.3681261},
booktitle = {Proceedings of the 32nd ACM International Conference on Multimedia},
pages = {4748–4756},
numpages = {9},
location = {Melbourne VIC, Australia},
series = {MM '24}
}

@inproceedings{ref77,
  title={Dual perspective network for audio-visual event localization},
  author={Rao, Varshanth and Khalil, Md Ibrahim and Li, Haoda and Dai, Peng and Lu, Juwei},
  booktitle={European Conference on Computer Vision},
  pages={689--704},
  year={2022},
  organization={Springer}
}

\vfill

\end{document}